\newcommand{\eps}{\varepsilon}
\newcommand{\Ot}{\tilde{O}}
\def\polyloglog{\operatorname{polyloglog}}
\newcommand{\Vor}[2]{\ensuremath{\mbox{Vor}({#1},{#2})}}
\newcommand{\Trivor}[2]{\ensuremath{{#1}_{\Delta}({#2})}}
\newcommand{\calV}{\mathcal{V}}
\newcommand{\calH}{\mathcal{H}}
\newcommand{\calR}{\mathcal{R}}
\newcommand{\dist}{\text{dist}}
\newtheorem{lemma}{Lemma}
\newtheorem{theorem}{Theorem}
\newtheorem{corollary}{Corollary}
\theoremstyle{definition}
\def\ie{{\it i.e.,}~}
\title{Fast and Compact Exact Distance Oracle for Planar Graphs}
\author{Vincent Cohen-Addad}
\author{Søren Dahlgaard\thanks{Research partly supported by Mikkel Thorup's
Advanced Grant DFF-0602-02499B from the Danish Council for Independent Research
under the Sapere Aude research career programme.}}
\author{Christian Wulff-Nilsen}
\affil{University of Copenhagen\\\texttt{[vincent.v,soerend,koolooz]@di.ku.dk}}
\date{}
\begin{document}

\setcounter{page}{0}
\maketitle
\begin{abstract}
  For a given a graph, a distance oracle is a data structure that answers
  distance queries between pairs of vertices.
  We introduce an $O(n^{5/3})$-space distance oracle 
  which answers exact distance queries in $O(\log n)$ time for $n$-vertex planar edge-weighted digraphs.
  All previous distance oracles for planar graphs with truly subquadratic space (i.e., space $O(n^{2 - \epsilon})$ 
  for some constant $\epsilon > 0$) either required query time polynomial in $n$ or could 
    only answer approximate distance queries.

  Furthermore, we show how to trade-off time and space: 
  for any $S \ge n^{3/2}$, we show how to obtain an $S$-space distance oracle that
  answers queries in time $O(\frac{n^{5/2}}{S^{3/2}} \log n)$. This is a polynomial
  improvement over the previous planar distance oracles with $o(n^{1/4})$ query
  time.

\end{abstract}

\thispagestyle{empty}

\newpage
\setcounter{page}{1}

\section{Introduction}

Efficiently storing distances between the pairs of vertices of a graph is a fundamental 
problem that has receive a lot of attention over the years.
Many graph algorithms and real-world problems require that the distances between pairs of 
vertices of a graph can be accessed efficiently. 
Given an edge-weighted digraph $G = (V,E)$ with $n$ vertices, 
a \emph{distance oracle} is a data structure that can efficiently
answer distance queries between pairs of vertices $u,v\in V$.

A naive approach consists in storing an $n\times n$ distance matrix, 
giving a distance query time of $O(1)$ by a simple table lookup. 
The obvious downside is the huge $\Theta(n^2)$ space requirement which is in 
many cases impractical.
For example, several popular routing heuristics 
(e.g.: for the travelling salesman problem) require fast access to distances 
between pairs of vertices. Unfortunately the inputs are usually too big to allow 
to store an $n \times n$ distance matrix (see e.g.:~\cite{TSPlib})\footnote{In these cases, 
the inputs are then embedded into the 2-dimensional plane so that the distances
can be computed in $O(1)$ time at the expense of working with incorrect distances.}.


Fast and compact data structures for distances are also critical in many routing problems. 
One important challenge in these applications is to process a large number of online queries
while keeping the space usage low, which is important for systems with limited
memory or memory hierarchies. 
Therefore, the alternative naive approach consisting in simply storing the graph $G$ and 
answering a query by running a shortest path algorithm on the entire graph
is also prohibitive for many applications.

Since road networks and planar graphs share many properties, 
planar graphs are often used for modeling various 
transportation networks (see e.g.:~\cite{MozesS12}).
Therefore obtaining good space/query-time trade-offs for planar distance oracles has been
studied thoroughly over the past
decades~\cite{Djidjev96,ArikatiCCDSZ96,ChenX00,FakR06,WNthesis,Cabello12,MozesS12}.

If $S$ represents the space usage and $Q$ represents the query time, the
trivial solutions described above would suggest a trade-off of 
$Q = n^2/S$\footnote{Using the $O(n)$ shortest path algorithm for planar graphs of 
Henzinger et al.~\cite{Henzinger97}.}.
Up to logarithmic factors, this trade-off is achieved by
the oracles of Djidjev~\cite{Djidjev96} and Arikati, et
al.~\cite{ArikatiCCDSZ96}. The oracle of Djidjev further improves on this
trade-off obtaining an oracle with $Q = n/\sqrt{S}$ for the range $S\in
[n^{4/3}, n^{3/2}]$ suggesting that this trade-off might instead be the correct
one. Extending this trade-off to the full range of $S$ was the subject of
several subsequent papers by Chen and Xu~\cite{ChenX00},
Cabello~\cite{Cabello12}, Fakcharoenphol and Rao~\cite{FakR06}, and finally
Mozes and Sommer~\cite{MozesS12} (see also the result of
Nussbaum~\cite{Nussbaum11}) obtaining a query time of $Q = n/\sqrt{S}$ for
the entire range of $S\in[n,n^2]$ (again ignoring constant and logarithmic
factors).

It is worth noting that the above mentioned trade-off between space usage and
query time is no better than the trivial solution of simply storing the
$n\times n$ distance matrix when constant (or even polylogarithmic) query time
is needed. In fact the best known result in this case due to
Wulff-Nilsen~\cite{WNthesis} who manages to obtain very slightly subquadratic
space of $O(n^2\polyloglog(n)/\log(n))$ and constant query time. It has been a major open question
whether an exact oracle with truly subquadratic (that is, $O(n^{2-\eps})$
for any constant $\eps > 0$) space usage and constant or even polylogarithmic query time
exists. Furthermore, the trade-offs obtained in the literature 
suggest that this might not be the case.

In this paper we break this quadratic barrier:
\begin{theorem}
  \label{Thm:main}
    Let $G = (V,E)$ be a weighted planar digraph with $n$
    vertices. Then there exists a data structure with $O(n^2)$ preprocessing time and $O(n^{5/3})$ space and a data structure with $O(n^{11/6})$ space and $O(n^{11/6})$ expected preprocessing time. Given any two query vertices $u,v\in V$, both oracles report the shortest path distance from $u$ to $v$ in $G$ in $O(\log n)$ time.
\end{theorem}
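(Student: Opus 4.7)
The plan is to combine an $r$-division of the planar graph with additively weighted Voronoi diagrams drawn inside each piece. I would first compute an $r$-division (for a parameter $r$ chosen at the end), giving $O(n/r)$ pieces, each with $O(r)$ vertices and $O(\sqrt{r})$ boundary vertices $\partial P$. The central observation is that for any vertex $v\in V$ and any $u$ in a piece $P$, we have
\[
d_G(v,u)=\min_{b\in\partial P}\bigl(d_G(v,b)+d_P(b,u)\bigr),
\]
since a $v$-to-$u$ shortest path must cross $\partial P$ and, after its last crossing, stays inside $P$. Viewed as a nearest-site problem on $P$ with sites $\partial P$ and additive site weights $\omega_v(b)=d_G(v,b)$, this is exactly what an additively weighted Voronoi diagram of $P$ (the structure suggested by the \texttt{Vor} macro in the preamble) captures.

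Given such a Voronoi diagram $\Vor{P}{\omega_v}$, a query $(u,v)$ is answered by locating the piece $P_u$ containing $u$ (via a table) and point-locating $u$ in $\Vor{P_u}{\omega_v}$: the returned cell identifies the minimizing $b$ and the stored value gives $d_G(v,b)+d_{P_u}(b,u)$. I would ensure this point location runs in $O(\log n)$ time, yielding the claimed query bound.

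For space and preprocessing: each Voronoi diagram should have combinatorial complexity $O(\sqrt{r})$, proportional to the number of sites, so storing one per (piece, external vertex) pair costs $O\!\bigl((n/r)\cdot n\cdot\sqrt{r}\bigr)=O(n^2/\sqrt{r})$. Choosing $r=n^{2/3}$ gives the $O(n^{5/3})$-space oracle. The preprocessing proceeds in two stages: first compute $d_G(v,b)$ for every vertex $v$ and every boundary vertex $b$ (using multi-source SSSP / \textsc{FR-Dijkstra} on the dense distance graph of the $r$-division), then construct each Voronoi diagram. Budgeting $O(r)$ per diagram gives $O(n^2)$ preprocessing in total. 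The $O(n^{11/6})$-space variant should come from picking $r=n^{1/3}$, so that the Voronoi-storage cost becomes $n^2/\sqrt{r}=n^{11/6}$; matching $O(n^{11/6})$ expected preprocessing time will require a randomized construction of the Voronoi diagrams (for instance, abstract-Voronoi-style randomized incremental construction) to avoid the $O(n^2)$ bound of the deterministic build.

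The hard part will be establishing the three structural properties of the additively weighted Voronoi diagrams on planar graph pieces: (i) an $O(\sqrt{r})$ bound on their combinatorial complexity, analogous to the classical site-linear bound in the Euclidean plane but in the graph metric $d_P$; (ii) an $O(\log n)$ point-location structure, which is what actually yields polylogarithmic queries and has no direct analogue in earlier oracles based on boundary scanning; and (iii) a construction that costs only near-linear time in the piece size, since the naive approach of computing all piece-internal shortest paths from each site is far too slow. Making (iii) match the $O(n^{11/6})$ preprocessing budget is precisely where the randomization in the second oracle enters. Once these are in place, the remaining arguments — combining the $r$-division bounds, the query protocol, and the trade-off in the choice of $r$ — are routine.
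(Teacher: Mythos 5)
Your high-level plan matches the paper's: compute an $r$-division, store distances from every vertex $v$ to every boundary vertex $b$, set up an additively weighted Voronoi diagram inside each region with the hole's boundary vertices as sites, and reduce a distance query to locating the query vertex in the correct Voronoi cell. The $O(\sqrt{r})$ bound on the combinatorial complexity of each diagram, the $O(n^2/\sqrt{r})$ accounting for storing them, and the choice $r=n^{2/3}$ (resp.\ $r=n^{1/3}$ with a randomized Voronoi construction for the $O(n^{11/6})$ variant) are all consistent with what the paper does. One small inaccuracy: your space bound omits the $\Theta(nr)$ term for per-region auxiliary data (all intra-region pairwise distances, plus the per-edge side-of-separator structures); at $r=n^{2/3}$ this equals $n^2/\sqrt{r}=n^{5/3}$ so the final answer is unchanged, but the balancing is between these two terms, not just minimization of the Voronoi storage.

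The genuine gap is the step you yourself flag as ``the hard part (ii)'': you write ``I would ensure this point location runs in $O(\log n)$ time'' and later that the $O(\log n)$ point-location structure ``has no direct analogue in earlier oracles,'' but you give no mechanism for it. That mechanism is essentially the entire technical content of the paper (Sections~4 and~5). Concretely, the paper triangulates $\Vor{R}{u}$ from the vertex $v_\infty$ corresponding to the hole, proves (Lemma~\ref{Lem:Separator}) that the resulting multigraph always admits a \emph{balanced $2$-cycle separator through $v_\infty$}, and stores a recursive decomposition into $O(\log r)$ levels. Each such $2$-cycle separator corresponds, via the embedding, to a primal Jordan curve made of two shortest paths from boundary vertices $b_1,b_2$ to the endpoints of a single edge $(x,y)$; Lemma~\ref{Lem:SepDecomposition} decomposes the region it bounds into two ``wedges'' $\Delta(u,\cdot,\cdot)$ and one ``box'' $\Box(\cdot,\cdot)$, and Lemmas~\ref{Lem:PieQuery} and~\ref{Lem:CompressedStrips} give an $O(r)$-space, $O(1)$-query structure (one per edge of the region) that decides which side of such a curve a given vertex lies on. Without these lemmas there is no way to descend the recursion, so there is no argument that query time is $O(\log n)$. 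A proposal that identifies the right data-structure shape but defers the one step that is actually new is not yet a proof of the theorem.
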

In addition to \Cref{Thm:main} we also obtain a distance oracle with a
trade-off between space
and query time.
\begin{theorem}\label{thm:tradeoff}
    Let $G = (V,E)$ be a weighted planar digraph with $n$ vertices. Let $S$
    denote the space, $P$ denote the preprocessing time, and $Q$ denote the
    query time. Then there exists planar distance oracles with the following
    properties:
    \begin{itemize}
        \item $P = O(n^2)$, $S\ge n^{3/2}$, and $Q =
            O(\frac{n^{5/2}}{S^{3/2}}\log n)$.
        \item $P = S$, $S\ge n^{16/11}$, and $Q =
            O(\frac{n^{11/5}}{S^{6/5}}\log n)$.
    \end{itemize}
\end{theorem}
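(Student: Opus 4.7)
Both items of \Cref{thm:tradeoff} would be obtained by parameterizing the construction behind \Cref{Thm:main}. That construction rests on the fact that an $r$-division partitions $G$ into $O(n/r)$ pieces of size $O(r)$ with $O(\sqrt{r})$-sized boundaries in $G$, and on the identity
\[
 d_G(u,v)=\min_{b\in\partial P}\bigl(d_G(u,b)+d_P(b,v)\bigr)
\]
valid for any piece $P$, source $u$ and target $v\in P$. The right-hand side is a closest-site query in the additively weighted Voronoi diagram of $\partial P$ inside $P$ with weights $\omega_b=d_G(u,b)$. This diagram has combinatorial complexity $O(\sqrt{r})$ and, after a per-piece preprocessing, admits $O(\log n)$ point-location for any setting of the weights.

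For the first item I would pick $r$ as a function of the space budget $S$, store for each piece a Voronoi point-location structure plus enough boundary-distance data to feed it at query time, and balance the two resulting terms against $S$. When $S\ge n^{5/3}$ the full per-$(u,P)$ Voronoi diagram fits and we recover \Cref{Thm:main} with $Q=O(\log n)$. For $n^{3/2}\le S<n^{5/3}$ we can no longer afford the per-source diagrams, so the query would recompute the additive weights $d_G(u,\partial P_v)$ via an FR-Dijkstra on the Dense Distance Graph; setting $r=\Theta((n^{2}/S)^{2})$ balances the piece-level Voronoi storage against the boundary-distance storage, and the FR-Dijkstra traversal on the resulting DDG contributes exactly the $O(n^{5/2}/S^{3/2}\log n)$ term.

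For the second item, the constraint $P=S$ rules out deterministic construction of the piece-level structures when $S\ll n^2$. I would follow the randomized $O(n^{11/6})$-space variant of \Cref{Thm:main}: sample a set $U\subseteq V$ of $\Theta(S/n)$ pivot vertices, construct the Voronoi structures only from pivots, and reduce a general query $(u,v)$ to one from the closest pivot to $u$ via a bounded-radius local Dijkstra. A hitting-set argument bounds the required search radius by $O(n^{11/5}/S^{6/5}\log n)$ with high probability, and the condition $S\ge n^{16/11}$ is precisely where this radius stays sub-linear.

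The main obstacle, shared by both items, is engineering the Voronoi point-location structure so that it is independent of the weights $\omega_b$, which are only known at query time; this is where both \Cref{Thm:main} and the trade-off theorem spend their technical budget, and it is also what forces the regime restrictions $S\ge n^{3/2}$ and $S\ge n^{16/11}$, since a smaller $S$ would prevent the Voronoi point-location data from fitting within the allotted space (respectively preprocessing).
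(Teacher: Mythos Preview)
Your plan rests on a misreading of what the point-location structure stores. In the construction behind \Cref{Thm:main}, the per-region data that is shared across all sources is the per-edge structure of Section~\ref{sec:PreprocRegion}; the recursive separator decomposition of $\Trivor{R}{u}$ (Section~\ref{sec:recursivedecomp}) is genuinely \emph{per source} $u$, because the Voronoi diagram itself changes with the weights. Knowing the weights $\omega_b=d_G(u,b)$ at query time is not enough: without the stored decomposition tree for that particular $u$ there is no binary search to run. So your proposed ``weight-independent Voronoi point-location'' does not exist in this paper, and the FR-Dijkstra step would only hand you the weights, not the tree. (Your arithmetic also does not close: with $r=(n^2/S)^2$, FR-Dijkstra on the dense distance graph costs $\tilde O(n/\sqrt{r})=\tilde O(S/n)$, which at $S=n^{5/3}$ is $\tilde O(n^{2/3})$, not $O(\log n)$.)

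The idea you are missing is the one the paper flags in the introduction: use \emph{two} $r$-divisions $\mathcal R_1,\mathcal R_2$ with $r_1\le r_2$. Voronoi decomposition trees are stored only for sources $u\in\delta_1$ (the $O(n/\sqrt{r_1})$ boundary vertices of the coarse division), one tree per region of $\mathcal R_2$. A query $(u,v)$ then brute-forces over the $O(\sqrt{r_1})$ boundary vertices of $u$'s $\mathcal R_1$-region, each of which has its Voronoi tree for $v$'s $\mathcal R_2$-region already stored, giving $Q=O(\sqrt{r_1}\log n)$ and space $O(nr_2+n^2/\sqrt{r_1r_2})$. Balancing via $r_2=n^{2/3}/r_1^{1/3}$ yields the first bullet.

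For the second bullet, the source of the $11/5$ and $6/5$ exponents is not sampling at all. The construction is the same two-division scheme; the only change is that the Voronoi diagrams in Step~\ref{step:recursivedecomp} are built with Cabello's randomized algorithm, which after $O(r_2^{7/2})$ preprocessing per $\mathcal R_2$-region computes each diagram in expected $\tilde O(\sqrt{r_2})$ time. This replaces the $O(n^2)$ term in the preprocessing by $O(nr_2^{5/2})$, and rebalancing $nr_2^{5/2}$ against $n^2/\sqrt{r_1r_2}$ (with $P=S$) gives $Q=O(n^{11/5}/S^{6/5}\log n)$ for $S\ge n^{16/11}$. Your hitting-set sketch does not lead here: a hitting-set bounds hop counts, not weighted radii, and the claimed ``search radius $O(n^{11/5}/S^{6/5}\log n)$'' has no derivation.
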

In particular,
this result improves on the current state-of-the-art~\cite{MozesS12}
trade-off between space and query time for $S \ge n^{3/2}$. The main idea
is to use two $r$-divisions, where we apply our structure from \Cref{Thm:main}
to one and do a brute-force search over the boundary nodes of the other.

\paragraph{Recent developments}
We note that the main focus of this paper is on space usage and query time,
and the the preprocessing time follows directly from our proofs (and by
applying the result of \cite{Cabello17} for subquadratic time).

After posting a preliminary version of this paper on
arXiv~\cite{Cohen-AddadDW17}, 
the algorithm of Cabello~\cite{Cabello17} was
improved by Gawrychowski et al.~\cite{GawrychowskiKMSW17} to run in
$\tilde{O}(n^{5/3})$ deterministically. As noted in \cite{GawrychowskiKMSW17}
this also improves the preprocessing time of our \Cref{Thm:main} to
$\tilde{O}(n^{5/3})$ while keeping the space usage at $O(n^{5/3})$. 
It is also possible to use Gawrychowski et al. to speed-up the pre-processing time
of our distance oracle described in \Cref{thm:tradeoff}.
This yields a distance oracle (in the notation of
\Cref{thm:tradeoff}) with $P = \tilde{O}(S)$, $S\ge
n^{3/2}$, and $Q = O(\frac{n^{5/2}}{S^{3/2}}\log n)$, thus eliminating entirely
the need of the second bullet point of \Cref{thm:tradeoff}.



\subsection*{Techniques}
We derive structural results on Voronoi diagrams for planar graphs when the 
centers of the Voronoi cells lie on the same face. 
The key ingredients
in our algorithm are a novel and technical separator decomposition and point
location structure for the regions in an $r$-division allowing us to perform
binary search to find a boundary vertex $w$ lying on a shortest-path between a
query pair $u,v$. These structures are applied on top of weighted Voronoi
diagrams, and our point location structure relies heavily on partitioning
each region into small ``easy-to-handle'' wedges which are shared by many such
Voronoi diagrams. 
More high-level ideas are given in
Section~\ref{sec:HighLevel}.

Our approach bears some similarities with the recent breakthrough of Cabello~\cite{Cabello17}.
Cabello showed that abstract Voronoi diagrams~\cite{Klein89,KleinLN09} studied in computational
geometry combined with planar $r$-division can be used to obtain fast planar graphs algorithms
for computing the diameter and wiener index. 
We start from Cabello's approach of using abstract Voronoi diagrams. 
While Cabello focuses on developing fast algorithms for computing abstract Voronoi diagrams
of a planar graph, we introduce a decomposition theorem for abstract Voronoi diagrams of planar graphs
and a new data structure for point location in planar graphs.


\subsection{Related work}
In this paper, we focus on distance oracles that report shortest path distances exactly. A closely related area is approximate
distance oracles. In this case, one can obtain near-linear space and constant or near-constant query time at the cost of a small $(1+\epsilon)$-approximation factor in the distances reported~\cite{Thorup04, Klein02, kawarabayashi2011linear, kawarabayashi2013more, WN16}.

One can also study the problem in a dynamic setting, where the graph undergoes
edge insertions and deletions. Here the goal is to obtain the best trade-off between update and
query time. Fakcharoenphol and Rao~\cite{FakR06} showed how to obtain
$\Ot(n^{2/3})$ for both updates and queries and a trade-off of $O(r)$ and
$O(n/\sqrt{r})$ in general. Several follow up works have improved this result
to negative edges and shaving further logarithmic
factors~\cite{Klein05,ItalianoNSW11,KaplanMNS12,GawrK16}. Furthermore, Abboud
and Dahlgaard~\cite{AbboudD16} have showed that improving this bound to
$O(n^{1/2-\eps})$ for any constant $\eps > 0$ would imply a truly subcubic algorithm for
the All Pairs Shortest Paths (APSP) problem in general graphs.

In the seminal paper of Thorup and
Zwick~\cite{ThorupZ05}, a $(2k-1)$-approximate distance
oracle is presented for undirected edge-weighted $n$-vertex general graphs using $O(kn^{1+1/k})$ space and $O(k)$ query time for any integer $k\ge 1$. Both query time and space has subsequently been improved to $O(n^{1+1/k})$ space and $O(1)$ query time while keeping an approximation factor of $2k-1$~\cite{wulff2012approximate, Chechik14, Chechik15}. This is near-optimal, assuming the widely believed and partially proven girth conjecture of Erd{\H{o}}s~\cite{erdHos1964extremal}.

\section{Preliminaries and Notations}

Throughout this paper we denote the input graph by $G$ and we assume that it is a directed planar graph with a fixed embedding. 
We assume that $G$ is connected (when ignoring edge orientations) as otherwise each connected component can be treated separately.

Section~\ref{sec:recursivedecomp} will make use of the geometry of the 
plane and associate Jordan curves to cycle separators.
Let $H$ be a planar embedded edge-weighted digraph.
We use $V(H)$ to denote the set of
vertices of $H$ and we denote by $H^*$ 
the dual of $H$ (with parallel edges and loops) and view it as an undirected graph. We assume a natural embedding of
$H^*$ \ie each dual vertex is in the interior of its corresponding primal face and each dual edge crosses its corresponding primal edge
of $H$ exactly once and intersects no other edges of $G$. We let $d_H(u,v)$ denote the shortest path distance from vertex $u$ to vertex $v$ in $H$.

\paragraph{$r$-division}
We will rely on the notion of $r$-division introduced by 
Frederickson~\cite{Frederickson87} and further developed by
Klein et al.~\cite{KleinMS13}.
For a subgraph $H$ of $G$, a vertex $v$ of $H$ is a {\em boundary
  vertex} if $G$ contains an edge not in $H$ that is incident to $v$. We let $\delta H$ denote the set of boundary vertices of $H$. Vertices of $V(H)\setminus\delta H$ are called {\em internal vertices} of $H$.
A {\em hole} of a subgraph $H$ of $G$ is a face of $H$ that is not 
a face of $G$.

Let $c_1$ and $c_2$ be constants.  For a number $r$, an {\em
$r$-division with few holes} 
of (connected) graph $G$ (with respect to $c_1, c_2$) 
is a collection $\mathcal R$ of
subgraphs of $G$, called {\em regions}, with the following properties.
\begin{enumerate}
\itemsep0pt
\item Each edge of $G$ is in exactly one region. \label{def:rdivisiona}
\item The number of regions is at most $c_1 |V(G)|/r$.\label{def:rdivisionb}
\item Each region contains at most $r$ vertices.\label{def:rdivisionc}
\item Each region has at most $c_2\sqrt r$ boundary vertices.\label{def:rdivisiond}
\item Each region contains only $O(1)$ holes.
\end{enumerate}
We make the simplifying assumption that each hole $H$ of each region $R$ is a simple cycle and that all its vertices belong to $\delta R$. We can always reduce to this case as follows. First, turn $H$ into a simple cycle by duplicating vertices that are visited more than once in a walk of the hole. Then for each pair of consecutive boundary vertices in this walk, add a bidirected edge between them unless they are already connected by an edge of $R$; the new edges are embedded such that they respect the given embedding of $R$. We refer to the new simple cycle obtained as a hole and it replaces the old hole $H$.

We also make the simplifying assumption that each face of a region $R$ is either a hole or a triangle and that each edge of $R$ is bidirected. This can always be achieved by adding suitable infinite-weight edges that respect the current embedding of $R$.

\paragraph{Non-negative weights and unique shortest paths}
As mentioned earlier, we may assume w.l.o.g.~that $G$ has non-negative edge weights. Furthermore, we assume uniqueness of shortest paths \ie
for any two vertices $x,y$ of a graph $G$, there is a unique path from
$x$ to $y$ that minimizes the sum of the weights of its edges. This can be achieved either with random perturbations of edge weights or deterministically with a slight overhead as described in~\cite{Cabello17}; we need the shortest paths uniqueness assumption only for the preprocessing step and thus the overhead only affects the preprocessing time and not the query time of our distance oracle.



\paragraph{Voronoi diagrams}
We now define the key notion of Voronoi diagrams. 
Let $G$ be a graph and $r>0$. 
Consider an $r$-division with few holes $\calR$ of $G$ and a region 
$R \in \calR$ and let $H$
be a hole of $R$. Let $u$ be a vertex of $G$ not in $R$.

Let $R_H$ be the graph obtained from $R$ by adding inside each hole $H'\ne H$ of $R$, a new vertex in its interior and infinite-weight bidirected edges between this vertex and the vertices of $H'$ (which by the above simplifying assumption all belong to $\delta R$), embedding the edges such that they are pairwise non-crossing and contained in $H'$.

Some of the following definitions are illustrated in Figure~\ref{fig:wt_vor}. Consider the shortest path tree $T_u$ in $G$ rooted at
$u$. For any vertex $x\in V(T_u)$, define $T_u(x)$ to be the subtree of $T_u$
rooted at $x$.
For each vertex $x \in V(H)$ we define the \emph{Voronoi cell}
of $x$ (w.r.t.~$u$, $R$, and $H$)
as the set of vertices of $R_H$ that belong to the subtree $T_u(x)$ 
and not to any subtree $T_u(y)$ for $y \in (V(H) - \{x\})\cap V(T_u(x))$.
The \emph{weighted Voronoi diagram} of $u$ w.r.t.~$R$ and $H$ is the collection of all the Voronoi cells of the vertices in $H$. 
For each vertex $x \in H$ we say that its \emph{weight}
is the shortest path distance from $u$ to $x$ in $G$.
Note that since we assume unique shortest paths and 
bidirectional edges, the weighted Voronoi diagram of $u$ w.r.t.~$R$ and $H$
is a partition of the vertices of $R_H$. Furthermore, each Voronoi cell contains exactly one vertex of $H$.
For any Voronoi cell $C$, we define its boundary edges to be the edges of $R_H$ that have exactly one endpoint in $C$. Let $B_H^*$ be the subgraph of $R_H^*$ consisting of the (dual) boundary edges over all Voronoi cells w.r.t.~$u$ and $R_H$; we ignore edge orientations and weights so that $B_H^*$ is an unweighted undirected graph. We define $\text{Vor}_{H}(R,u)$ to be the multigraph
obtained from $B_H^*$ by replacing each maximal path whose interior vertices have degree two by a single edge whose embedding coincides with the path it replaces. When $H$ is clear from context, we simply write $\Vor{R}{u}$.

\begin{figure}[ht]
    \centering
    \includegraphics[width=.5\textwidth]{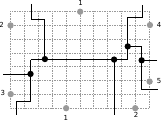}
    \caption{Illustration of the weighted Voronoi diagram of a vertex $u$ (not shown) w.r.t.~a region $R$ (grey) and a hole $H$ (dotted edges and seven grey vertices). Edges of $R$ are bidirected and have weight $1$. The
    number next to a vertex of $H$ is the weight of that vertex. Vertices of the
    Voronoi diagram (which are dual vertices and hence faces of the primal) and its boundary are shown in black except the one embedded inside $H$. The graph $\text{Vor}_{H}(R,u)$ has seven edges.
    Note that this illustration does \emph{not} have unique shortest paths as
    assumed in this paper. Note also that the shortest path from $u$ to the
    node with weight $5$ goes through the nearby node with weight $2$. Thus the
    rest of the shortest-path tree from $u$ has been assigned to the node with
    weight $5$. Note also that the illustration is not triangulated.}
    \label{fig:wt_vor}
\end{figure}

\section{High-level description}\label{sec:HighLevel}
We now give a high-level description of our distance oracle where we omit the details needed to get our preprocessing time bounds. Our data structure is constructed on top of an $r$-division of
the graph. For each region $R$ of the $r$-division we store a look-up
table of the distance in $G$ between each ordered pair of vertices $u,v\in V(R)$. We also store a
look-up table of distances in $G$ from each vertex $u\in V$ to the boundary vertices of
$R$. In total this part requires $O(nr + n^2/\sqrt{r})$ space.

The difficult case is when two vertices $u$ and $v$
from different regions are queried. To do this we will use weighted Voronoi
diagrams. More specifically, for every vertex
$u$, every region $R$, and every hole $H$ of $R$, we construct a recursive separator decomposition
of the weighted Voronoi diagram of $u$ w.r.t.~$R$ and $H$. The goal is to determine the boundary vertex $w$ such that $v$ is contained in the Voronoi cell of $u$. If we can do this,
we know that $d_G(u,v) = d_G(u,w) + d_{R_H}(w,v)$ for one of the holes $H$ of $R$. To determine this we use a
carefully selected recursive decomposition. This decomposition is stored in a compact way and we show how it enables binary search to find $w$ in $O(\log r)$ time.

In order to store all of the above mentioned parts efficiently we will employ
the compact representation of the abstract Voronoi diagram (namely $\Vor{R}{u}$ as defined in the previous section). This requires only $O(\sqrt r)$ space for each choice of $u$, $R$, and $H$ for a total of $O(n^2/\sqrt r)$ space. This also dominates the space for storing the recursive decompositions.

Finally, we store for each graph $R_H$ and each possible separator of $R_H$ the set of vertices on one side of the separator, since this is needed to perform the binary search. This is done in a compact way requiring only $O(r^2)$ space per region. Thus,
the total space requirement of our distance oracle is $O(nr + n^2/\sqrt{r})$. Picking $r=n^{2/3}$ gives the desired $O(n^{5/3})$ space bound.

\section{Recursive Decomposition of Regions}\label{sec:recursivedecomp}
In this section, we consider a region $R$ in an $r$-division of a planar
embedded graph $G = (V,E)$, a vertex $u\in V - V(R)$, and a hole $H$ of $R$
which we may assume is the outer face of $R$. To simplify notation, we identify
$R$ with $R_H$ and let $\delta R$ denote the boundary vertices of $R$ belonging
to $H$, i.e., $\delta R = V(H)$ (by our simplifying assumption regarding holes
in the preliminaries). The dual vertex corresponding to $H$ is denoted
$v_{\infty}(R,u)$ or just $v_{\infty}$.

We assume that $R$ contains at least three boundary vertices. 
Recall that each face of $R$ other than the outer face is a triangle and so,
each vertex of $\Vor{R}{u}$ other than $v_{\infty}$ has degree $3$.
Moreover, every cell
of $\Vor{R}{u}$ contains exactly one boundary vertex, therefore the
boundary of each cell of $\Vor{R}{u}$ contains exactly one occurrence of
$v_{\infty}$ and contains at least one other vertex.
Also note that the cyclic ordering of cells of
$\Vor{R}{u}$ around $v_{\infty}$ is the same as the cyclic ordering $\delta R$
of boundary vertices of $R$.

Construct a plane multigraph $\Trivor{R}{u}$ from $\Vor{R}{u}$ as follows. First, for every Voronoi cell $C$, add an edge from $v_{\infty}$ to each vertex of $C$ other than $v_{\infty}$ 
itself; 
these edges are embedded such that they are fully contained in $C$ 
and such that they are pairwise non-crossing. For each such edge $e$, denote by $C(e)$ the 
cell it is embedded in.  
To complete the construction
of $\Trivor{R}{u}$, remove every edge incident to $v_{\infty}$ 
belonging to $\Vor{R}{u}$. The construction of $\Trivor{R}{u}$ is illustrated
in Figure~\ref{fig:multigraph}.

\begin{figure}[htbp]
    \centering
    \includegraphics[width=.4\textwidth]{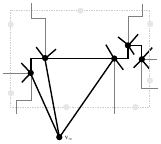}
    \caption{Illustration of $\Trivor{R}{u}$ highlighted in bold black edges.
    To avoid clutter, most edges incident to $v_\infty$ are only sketched.}
    \label{fig:multigraph}
\end{figure}




\paragraph{Recursive decomposition using a Voronoi diagram}
In this section, we show how to obtain a recursive decomposition of
$\Trivor{R}{u}$ into subgraphs called \emph{pieces}. A piece $Q$ is decomposed
into two smaller pieces by a cycle separator $S$ of size $2$ containing
$v_{\infty}$. Each of the two subgraphs of $Q$ is obtained by replacing the
faces of $Q$ on one side of $S$ by a single face bounded by $S$. The separator
$S$ is balanced w.r.t.~the number of faces of $Q$ on each side of $S$. The
recursion stops when a piece with at most six faces is obtained. It will be
clear from our construction below that the collection of cycle separators over
all recursive calls form a laminar family, i.e., they are pairwise
non-crossing.

We assume a linked list representation of each piece $Q$ where edges are ordered clockwise around each vertex.

Lemma~\ref{Lem:Separator} below shows how to find the cycle separators needed to obtain the recursive decomposition into pieces. Before we can prove it, we need the following result.
\begin{lemma}\label{Lem:PieceStructure}
Let $Q$ be one of the pieces obtained in the above recursive decomposition. Then for every vertex $v$ of $Q$ other than $v_{\infty}$,
\begin{enumerate}
    \item $v$ has at least two edges incident to $v_{\infty}$,
\item for each edge $(v,w)$ of $Q$ where $w\neq v_{\infty}$, the edge preceding and the edge following $(v,w)$ in the clockwise ordering around $v$ are both incident to $v_{\infty}$, and
\item for every pair of edges $e_1 = (v,v_{\infty})$ and $e_2 = (v,v_{\infty})$ where $e_2$ immediately follows $e_1$ in the clockwise ordering around $v$, if both edges are directed from $v$ to $v_{\infty}$ then the subset of the plane to the right of $e_1$ and to the left of $e_2$ is a single face of $Q$.
\end{enumerate}
\end{lemma}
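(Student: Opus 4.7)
My plan is to prove the lemma by induction on the depth of the recursive decomposition, with the base case being $Q = \Trivor{R}{u}$ and the inductive step corresponding to a single application of a size-$2$ cycle separator through $v_\infty$.

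For the base case I would first observe that every non-$v_\infty$ vertex has degree exactly $3$ in $\Vor{R}{u}$: this follows because each non-$v_\infty$ dual vertex of $R$ has degree $3$ (since $R$ is triangulated except at the hole), and $\Vor{R}{u}$ suppresses all degree-$2$ vertices, keeping only branching vertices. Hence the three faces of $\Vor{R}{u}$ around such a vertex $v$ are three distinct Voronoi cells, each containing $v_\infty$ on its boundary, and the construction of $\Trivor{R}{u}$ adds exactly one edge from $v_\infty$ to $v$ inside each of these cells. This yields property~(1) immediately. For~(2), these three added edges alternate with the three $\Vor{R}{u}$-edges at $v$ in the cyclic ordering around $v$, since each added edge sits strictly inside one cell and is flanked by the two $\Vor{R}{u}$-edges bounding that cell at $v$; removing the $\Vor{R}{u}$-edges from $v$ to $v_\infty$ therefore leaves every surviving non-$v_\infty$ edge flanked by two edges to $v_\infty$. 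For~(3), two edges to $v_\infty$ at $v$ become consecutive in $\Trivor{R}{u}$ only when the $\Vor{R}{u}$-edge $f$ separating them at $v$ was itself from $v$ to $v_\infty$ and was therefore removed; the sliver regions on the two sides of $f$ inside the two adjacent cells (each bounded at $v$ by one added edge together with $f$) then merge into a single bigon face bounded by the two added edges once $f$ is removed.

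For the inductive step, let $Q'$ be obtained from $Q$ by a size-$2$ cycle separator $S = \{e_a, e_b\}$ consisting of two parallel edges between $v_\infty$ and a vertex $w$; $Q'$ retains all $Q$-edges on one side of $S$ together with $e_a, e_b$ themselves, replacing the faces of $Q$ on the other side of $S$ by a single new bigon face bounded by $S$. For a vertex $v \ne w, v_\infty$ on the retained side, the local cyclic order and incident faces of $v$ are inherited verbatim from $Q$, and properties (1)-(3) carry over by induction. For $v = w$, the edges at $w$ in $Q'$ are $e_a, e_b$ together with the retained-side edges of $Q$ at $w$ in their original cyclic order, and~(1) is immediate from $e_a, e_b$. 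For~(2), any retained non-$v_\infty$ edge at $w$ was flanked in $Q$ by edges to $v_\infty$ by induction, and those flanking edges are either retained in $Q'$ or are precisely $e_a, e_b$. For~(3), two consecutive edges to $v_\infty$ at $w$ in $Q'$ are either both retained, in which case they were already consecutive in $Q$ (no retained-side edge lies between them) and the single face between them is inherited from the inductive hypothesis; or they are $e_a$ and $e_b$ meeting through the non-retained side, in which case the face between them in $Q'$ is the new bigon, itself a single face.

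I expect the main obstacle to be property~(3) in the base case, specifically ruling out that some other $v_\infty$-incident edge cuts into the wedge between the two newly-consecutive edges at $v$. The resolution is to note that, inside each of the two cells adjacent to the removed edge $f$, the added edge from $v_\infty$ to $v$ is extremal in the non-crossing fan on the side facing $f$: the remaining fan edges inside that cell all go to boundary vertices on the far side of $v$, so the two slivers adjacent to $f$ contain no edges of $\Trivor{R}{u}$ in their interiors and merge cleanly into a single bigon upon the removal of $f$.
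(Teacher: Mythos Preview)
Your proof is correct and follows essentially the same approach as the paper: induction on the depth in the recursive decomposition, with the inductive step handled by the case split $v\neq w$ versus $v=w$ for the separator vertex $w$. Your inductive step is virtually identical to the paper's; in the base case you argue more explicitly via the alternating structure of the three $\Vor{R}{u}$-edges and the three added fan edges at $v$, whereas the paper appeals directly to the fact that every face of $\Trivor{R}{u}$ contains $v_\infty$ and has at most three edges---but these are two views of the same construction, and your extremality argument for part~(3) is exactly what underlies the paper's ``immediate from the construction'' claim.
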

\begin{proof}
The proof is by induction on the depth $i\ge 0$ in the recursion tree of the
    node corresponding to piece $Q$. Assume that $i = 0$ and let $v\neq
    v_{\infty}$ be given. The first and third part of the lemma follow
    immediately from the construction of $\Trivor{R}{u}$ and the assumption that $|\delta R|\ge 3$. To show the second
    part, it suffices by symmetry to consider the edge $e$ following $(v,w)$ in
    the cyclic ordering of edges around $v$ in $Q = \Trivor{R}{u}$. Since $e$
    and $(v,w)$ belong to the same face of $Q$ and since each face of $Q$ contains $v_{\infty}$ and at most three edges, the second part
    follows.

Now assume that $i > 0$ and that the claim holds for smaller values. Let $v\neq
    v_{\infty}$ be given. Consider the parent piece $Q'$ of $Q$ in the
    recursive decomposition tree and let $S$ be the cycle separator that was
    used to decompose $Q'$. Then $S$ contains $v_{\infty}$ and one additional
    vertex $v'$. To show the inductive step, we claim that we only need to consider the case
    when $v' = v$. This is clear for the first and second part since if $v'\ne v$ then $v$ has the same set of incident edges in $Q'$ and in $Q$. It is also clear for the third part since $Q$ is a subgraph of $Q'$.

It remains to show the induction step when $v' = v$. The first part follows since the two edges of $S$ are incident to $v$ and to $v_{\infty}$ and belong to $Q$. The second part follows by observing that the clockwise ordering of edges around $v$ in $Q$ is obtained from the clockwise ordering around $v$ in $Q'$ by removing an interval of consecutive edges in this ordering; furthermore, the first and last edge in the remaining interval are both incident to $v_{\infty}$. Applying the induction hypothesis shows the second part.

For the third part, if $e_2$ immediately follows $e_1$ in the clockwise ordering around $v$ in $Q'$ then the induction hypothesis gives the desired. Otherwise, $e_1$ and $e_2$ must be the two edges of $S$ and $Q$ is obtained from $Q'$ by removing the faces to the right of $e_1$ and to the left of $e_2$ and replacing them by a single face bounded by $S$.
\end{proof}

In the following, let $Q$ be a piece with more than six faces. The following lemma shows that $Q$ has a balanced cycle separator of size $2$ which can be found in $O(|Q|)$ time.
\begin{lemma}\label{Lem:Separator}
$Q$ as defined above contains a $2$-cycle $S$ containing $v_{\infty}$ such that the number of faces of $Q$ on each side of $S$ is a fraction between $1/3$ and $2/3$ of the total number of faces of $Q$. Furthermore, $S$ can be found in $O(|Q|)$ time.
\end{lemma}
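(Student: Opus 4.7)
The plan is to exploit a tree structure latent in any piece $Q$. First I would argue that every face of $Q$ contains $v_\infty$ on its boundary: this holds for $\Trivor{R}{u}$ by construction and is preserved by each recursive split along a $2$-cycle through $v_\infty$, because the new face replacing one side of such a separator is itself bounded by edges incident to $v_\infty$. Euler's formula then gives $f = \deg_Q(v_\infty)$ and forces $T := Q - v_\infty$ (obtained by deleting $v_\infty$ and all its incident edges) to be a tree on $|V(Q)|-1$ vertices, since the deletion merges all faces into a single outer face.

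Next I would use \Cref{Lem:PieceStructure} to describe the local structure around each non-$v_\infty$ vertex $v$. Its $d(v)\ge 2$ edges to $v_\infty$ cut the cyclic order around $v_\infty$ into $d(v)$ \emph{slots}. By parts~2--3 of that lemma, each slot either contains no non-$v_\infty$ edge---and then corresponds to a single bigon face of $Q$---or contains exactly one non-$v_\infty$ edge of $T$ leading into a subtree $T'$ of $T$ hanging off $v$, in which case the slot corresponds to $1 + w(T')$ consecutive faces of $Q$, where $w(T') := \sum_{u \in T'} d(u)$. A $2$-cycle through $v_\infty$ using two edges at a common vertex $v$ splits the faces of $Q$ into two cyclic arcs whose face counts equal the sums of the slot weights on each side, and the slot weights at a fixed $v$ sum to $f$.

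With this correspondence in hand, the problem reduces to: find a vertex $c$ and a cyclic contiguous arc of its slots whose total weight lies in $[f/3, 2f/3]$. I would take $c$ to be a centroid of $T$ weighted by $d(\cdot)$, so every component of $T - c$ has weight at most $f/2$; hence every slot weight at $c$ is at most $1 + f/2$, which is at most $2f/3$ whenever $f \ge 6$, and by hypothesis $f > 6$. A partial-sum sweep around the slots at $c$ then produces the desired arc: either some single slot already has weight in $(f/3, 2f/3]$ and is picked alone, or every slot has weight at most $f/3$ and the running sum, growing from $0$ to $f$, must land in $[f/3, 2f/3]$ before crossing $2f/3$. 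The two $v_\infty$-edges at $c$ bounding the chosen arc form the required 2-cycle $S$. Building $T$, computing $d(\cdot)$, finding a weighted tree centroid, and sweeping the slots at $c$ are all linear in $|Q|$.

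The main obstacle I anticipate is the bookkeeping of the second step: nailing down the bijection between contiguous arcs of slots at $v$ and $2$-cycles through $v_\infty$ in $Q$, and verifying that the slot weights count exactly the right number of faces (in particular accounting for the ``$+1$'' contributed by the two triangles abutting each subtree). Once that structural correspondence is laid out, the centroid plus partial-sum sweep is a short and routine argument.
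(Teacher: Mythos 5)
Your proof is correct and takes a genuinely different route from the paper's. The paper proves the lemma by a local iterative walk: start at an arbitrary vertex with an arbitrary 2-cycle, and at each step either rotate an edge at the current vertex to shrink the heavy side, or (using part~2 of \Cref{Lem:PieceStructure}) follow the unique non-$v_\infty$ edge to a new vertex; termination follows because the heavy face set strictly decreases, and the telescoping bound gives $O(|Q|)$ time. You instead make the underlying structure explicit: observe that since every face is incident to $v_\infty$, the graph $T = Q - v_\infty$ is a tree (Euler plus the fact that any cycle in $T$ would trap a face away from $v_\infty$), give each vertex weight $d(v) = $ its multiplicity to $v_\infty$ (so the weights sum to $f$), and then invoke the standard existence of a weighted tree centroid followed by a cyclic partial-sum sweep over the $d(c)$ slots at the centroid, using $1 + f/2 \le 2f/3$ for $f \ge 6$. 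The slot-weight accounting you flagged as the bookkeeping risk does check out: a non-empty slot corresponds to $1 + w(T')$ faces because the $T'$-side of the associated 2-cycle contains the $w(T')$ edges to $v_\infty$ from $T'$ plus the two bounding edges at $c$, hence $w(T')+1$ gaps. What your route buys is a cleaner, more modular argument that separates the ``find a balance point'' step (a textbook centroid) from the ``cut at the balance point'' step (a linear sweep), at the cost of having to establish the tree structure explicitly; the paper's route is more self-contained and avoids naming the tree or the centroid, folding the two steps into a single greedy descent whose time analysis needs the telescoping observation. Both are linear time and both ultimately rely on the same structural lemma about pieces.
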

\begin{proof}
We construct $S$ iteratively. In the first iteration, pick an arbitrary vertex $v_1\neq v_{\infty}$ of $Q$ and let $S_1$ consist of two distinct arbitrary edges, both incident to $v_1$ and $v_{\infty}$. This is possible by the first part of Lemma~\ref{Lem:PieceStructure}.

Now, consider the $i$th iteration for $i > 1$ and let $v_{i-1}$ and $v_{\infty}$ be the two vertices of the $2$-cycle $S_{i-1}$ obtained in the previous iteration. If $S_{i-1}$ satisfies the condition of the lemma, we let $S = S_{i-1}$ and the iterative procedure terminates.

Otherwise, one side of $S_{i-1}$ contains more than $2/3$ of the faces of $Q$. Denote this set of faces by $\mathcal F_{i-1}$ and let $E_{i-1}$ be the set of edges of $Q$ incident to $v_{i-1}$, contained in faces of $\mathcal F_{i-1}$, and not belonging to $S_{i-1}$. We must have $E_{i-1}\neq\emptyset$; otherwise, it follows from the third part of Lemma~\ref{Lem:PieceStructure} that $\mathcal F_{i-1}$ contains only a single face of $Q$ (bounded by $S_{i-1}$), contradicting our assumption that $Q$ contains more than six faces and that $\mathcal F_{i-1}$ contains more than $2/3$ of the faces of $Q$.

If $E_{i-1}$ contains an edge incident to $v_{\infty}$, pick an arbitrary such edge $e_{i-1}$. This edge partitions $\mathcal F_{i-1}$ into two non-empty subsets; let $\mathcal F_{i-1}'$ be the larger subset. We let $v_i = v_{i-1}$ and let $S_i$ be the $2$-cycle consisting of $e_{i-1}$ and the edge of $S_{i-1}$ such that one side of $S_i$ contains exactly the faces of $\mathcal F_{i-1}'$.

Now, assume that none of the edges of $E_{i-1}$ are incident to $v_{\infty}$. Then by the second part of Lemma~\ref{Lem:PieceStructure}, $E_{i-1}$ contains exactly one edge $e_{i-1}$. We let $v_i\neq v_{i-1}$ be the other endpoint of $e_{i-1}$ and we let $S_i$ consist of the two edges incident to $v_i$ which belong to the two faces of $\mathcal F_{i-1}$ incident to $e_{i-1}$.

To show the first part of the lemma, it suffices to prove that the above iterative procedure terminates. Consider two consecutive iterations $i > 1$ and $i+1$ and assume that the procedure does not terminate in either of these. We claim that then $\mathcal F_i\subset \mathcal F_{i-1}$. If we can show this, it follows that $|\mathcal F_1| > |\mathcal F_2| >\ldots$ which implies termination.

If $E_{i-1}$ contains an edge incident to $v_{\infty}$ then $\mathcal F_{i-1}'$ contains more than $1/3$ of the faces of $Q$. Since the procedure does not terminate in iteration $i+1$, $\mathcal F_{i-1}'$ must in fact contain more than $2/3$ of the faces so $\mathcal F_i = \mathcal F_{i-1}'\subset \mathcal F_{i-1}$, as desired.

Now, assume that none of the edges of $E_{i-1}$ are incident to $v_{\infty}$. Then one side of $S_i$ contains exactly the faces of $\mathcal F_{i-1}$ excluding two. Since we assumed that $Q$ contains more than six faces, this side of $S_i$ contains more than $2/3$ of these faces. Hence, $\mathcal F_i\subset\mathcal F_{i-1}$, again showing the desired.

For the second part of the lemma, note that counting the number of faces of $Q$ contained in one side of a $2$-cycle containing $v_{\infty}$ can be done in the same amount of time as counting the number of edges incident to $v_{\infty}$ from one edge of the cycle to the other in either clockwise or counter-clockwise order around $v_{\infty}$. This holds since every face of $Q$ contains $v_{\infty}$. It now follows easily from our linked list representation of $Q$ with clockwise orderings of edges around vertices that the $i$th iteration can be executed in $O(|\mathcal F_{i-1}| - |\mathcal F_i|)$ time for each $i > 1$. This shows the second part of the lemma.
\end{proof}
\begin{corollary}\label{Cor:Separatortime}
  Given $\Vor{R}{u}$ and $\Trivor{R}{u}$, its recursive decomposition can
  be computed in $O(\sqrt r\log r)$ time.
\end{corollary}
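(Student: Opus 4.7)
The plan is to apply Lemma~\ref{Lem:Separator} recursively to decompose $\Trivor{R}{u}$, together with a standard balanced-recursion time analysis.

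First I will establish a size bound: $|\Trivor{R}{u}| = O(\sqrt{r})$ and its number of faces is $O(\sqrt{r})$. Indeed, the hole $H$ has $|\delta R| = O(\sqrt{r})$ boundary vertices, so $\Vor{R}{u}$ has $O(\sqrt{r})$ cells (faces). Since every vertex of $\Vor{R}{u}$ other than $v_{\infty}$ has degree $3$, Euler's formula forces $|V(\Vor{R}{u})|$ and $|E(\Vor{R}{u})|$ to also be $O(\sqrt{r})$. The passage to $\Trivor{R}{u}$ only changes this by a constant factor: each non-$v_{\infty}$ vertex touches at most three cells and hence gains at most three new edges to $v_{\infty}$, while the original edges of $\Vor{R}{u}$ incident to $v_{\infty}$ are deleted. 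In the resulting graph every face is a triangle containing $v_{\infty}$, so the face count is $O(\sqrt{r})$ as well.

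Next I run the recursion. Given a piece $Q$ with more than six faces, Lemma~\ref{Lem:Separator} finds a balanced $2$-cycle separator $S$ in $O(|Q|)$ time. Using the linked-list representation of $Q$ (edges in clockwise order around every vertex), I materialize the two child pieces $Q_1$ and $Q_2$ in $O(|Q|)$ additional time by scanning the cyclic orderings around the two endpoints of $S$ (namely $v_{\infty}$ and the other separator vertex $v'$) and splicing the adjacency lists along the two edges of $S$; this also lets me write the two edges of $S$ into both child lists. Consequently $|Q_1| + |Q_2| \le |Q| + O(1)$, while each $Q_i$ has at most $(2/3)$ of the faces of $Q$.

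Finally I total the work. Letting $T(n)$ denote the worst-case running time for a piece of size $n$, the recurrence is $T(n) \le T(n_1) + T(n_2) + O(n)$ with $n_1 + n_2 \le n + O(1)$ and $\max(n_1,n_2) \le (2/3)n$ (the face counts are linearly related to the size by the triangulated structure and the constant ratio is preserved). A standard recursion-tree argument then gives $T(n) = O(n \log n)$: every level of the tree has total piece-size $O(n)$ because the $O(1)$ blow-up per split is absorbed by the fact that the tree has $O(n)$ leaves, and the depth is $O(\log n)$ by balancedness. Plugging in $n = O(\sqrt{r})$ yields $T(|\Trivor{R}{u}|) = O(\sqrt{r}\log r)$.

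The only step that requires real care is implementing the split of $Q$ into $Q_1$ and $Q_2$ in $O(|Q|)$ time within the linked-list representation, since this is what lets the per-level cost stay linear in the piece size rather than growing with the number of pieces at that level; verifying this relies on the structural guarantees from Lemma~\ref{Lem:PieceStructure} that ensure the edges around $v_{\infty}$ and around the second separator vertex can each be split into two contiguous arcs.
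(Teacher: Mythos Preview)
Your proof is correct and follows essentially the same approach as the paper: bound $|\Trivor{R}{u}|$ by $O(\sqrt r)$, observe that the balanced separators of Lemma~\ref{Lem:Separator} give $O(\log r)$ recursion depth, and note that the total size of all pieces at any fixed level stays $O(\sqrt r)$, so the overall cost is $O(\sqrt r \log r)$. The paper's proof is terser---it simply asserts the per-level size bound without spelling out the $O(|Q|)$ split implementation or the $O(1)$ size blow-up per split---while you supply those details explicitly, but the argument is the same.
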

\begin{proof}
$\Vor{R}{u}$ has complexity $|\Vor{R}{u}| = O(\sqrt r)$ and $\Trivor{R}{u}$ can be found in time linear in this complexity. Since the recursive decomposition of $\Trivor{R}{u}$ has $O(\log r)$ levels and since the total size of pieces on any single level is $O(|\Trivor{R}{u}|) = O(|\Vor{R}{u}|) = O(\sqrt r)$, the corollary follows from Lemma~\ref{Lem:Separator}.
\end{proof}


\begin{lemma}
  \label{Lem:sepspace}
  The recursive decomposition of $\Trivor{R}{u}$ can be stored using $O(\sqrt r)$ space.
\end{lemma}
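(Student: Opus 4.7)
The plan is to store the recursive decomposition implicitly as a binary tree on top of a single shared copy of $\Trivor{R}{u}$, rather than storing each individual piece in full. Each internal node of this tree will store the $2$-cycle separator $S$ used to split its piece, represented as $O(1)$ pointers into the edge list of $\Trivor{R}{u}$, together with two pointers to its children. Each leaf will store an explicit $O(1)$-size representation of its piece (which has at most six faces, hence $O(1)$ edges by Euler's formula since every face contains $v_{\infty}$ and is a triangle). Thus each node of the recursion tree uses $O(1)$ space, and it remains only to bound the number of nodes.

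First I would bound $|\Trivor{R}{u}|$. Since $|\delta R| = O(\sqrt r)$ and $\Vor{R}{u}$ has $O(\sqrt r)$ vertices and edges, and $\Trivor{R}{u}$ is obtained from $\Vor{R}{u}$ by adding, for each Voronoi cell $C$, one edge from $v_{\infty}$ to each non-$v_{\infty}$ vertex of $C$ (and then deleting the old edges at $v_{\infty}$), we have $|V(\Trivor{R}{u})|, |E(\Trivor{R}{u})| = O(\sqrt r)$. By Euler's formula, the number of faces of $\Trivor{R}{u}$ is also $O(\sqrt r)$.

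Next, the key observation is that each application of Lemma~\ref{Lem:Separator} partitions the faces of the current piece $Q$ into the faces on either side of $S$; no face is duplicated, since the two pieces created by the separator together contain all faces of $Q$ exactly once (the only new face, bounded by $S$, replaces the removed side and is charged to the opposite child). Consequently, the multiset of faces of all leaf pieces is exactly the set of faces of $\Trivor{R}{u}$, which has cardinality $O(\sqrt r)$. Since each leaf has between one and six faces, the number of leaves is $O(\sqrt r)$, and since the recursion tree is binary the total number of nodes is $O(\sqrt r)$. Multiplying by the $O(1)$ space per node gives the claimed $O(\sqrt r)$ bound.

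The main obstacle I anticipate is making the bookkeeping consistent: the separator $S$ at a node must be encoded by references to edges living in the single global copy of $\Trivor{R}{u}$, yet its two endpoints of $S$ (the $2$-cycle) together with the ``which side'' bit must suffice to reconstruct, on demand, the linked-list representation of either child piece when the recursive decomposition is traversed at query time. This is possible because, as noted in the proof of Lemma~\ref{Lem:Separator}, the faces of a piece on one side of $S$ correspond to a contiguous clockwise interval of edges around $v_{\infty}$, delimited by the two edges of $S$; hence storing only $S$ and an orientation bit lets us recover the piece using the clockwise edge order of $\Trivor{R}{u}$ around $v_{\infty}$, without explicitly storing each piece separately.
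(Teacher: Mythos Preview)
Your approach is exactly the paper's: its entire proof is the two-line observation that the decomposition tree has $O(\sqrt r)$ nodes and each separator consists of two edges, hence $O(1)$ space. You supply the details the paper omits, and your reconstruction argument via clockwise intervals of edges around $v_\infty$ is correct.

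One small slip in the counting: the claim that ``the multiset of faces of all leaf pieces is exactly the set of faces of $\Trivor{R}{u}$'' is false, because each split adds a new $2$-gon face bounded by $S$ to \emph{each} child (two new faces per split, not one). Thus the total number of faces across all leaves is $F_0 + 2(L-1)$, where $F_0 = O(\sqrt r)$ is the face count of $\Trivor{R}{u}$ and $L$ is the number of leaves. This does not break your bound: every non-root leaf has at least four faces (its parent had at least seven, so it received at least $\lceil 7/3\rceil = 3$ of them plus one new $2$-gon), whence $4L \le F_0 + 2(L-1)$ and $L \le (F_0-2)/2 = O(\sqrt r)$ as desired.
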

\begin{proof}
  Observe that the number of nodes of the tree decomposition is 
  $O(\sqrt{r})$ and each separator consists of two edges and so takes
  $O(1)$ space.
\end{proof}

\paragraph{Embedding of $\Trivor{R}{u}$:}
We now provide a more precise definition of the embedding of $\Trivor{R}{u}$.
Let $f_{\infty}$ be the face of $R$ corresponding to $v_{\infty}$ in $R^*$, i.e., $f_{\infty}$ is the hole $H$.
Consider the graph $\tilde{R}$ that consists of $R$ plus
a vertex $\tilde{v}_{\infty}$ located in $f_{\infty}$ and an edge between
each vertex of $f_{\infty}$ and $\tilde{v}_{\infty}$. The rest of $\tilde{R}$ is 
embedded consistently with respect to the embedding of 
$R$.

Now, consider the following embedding of $\Trivor{R}{u}$. 
First, embed $v_{\infty}$ to $\tilde{v}_{\infty}$. We now specify the embedding of each edge
adjacent to $v_{\infty}$.
Recall that each edge $e$ that is adjacent to $v_{\infty}$ lies in 
a single cell $C(e)$ of $\Vor{R}{u}$.
For each such edge $e$ going from $v_{\infty}$ to a vertex $w^*$ of $\Trivor{R}{u}$, we 
embed it so that it follows the edge from $\tilde{v}_{\infty}$ to the boundary 
vertex $b_e$ of $C(e)$,
then the shortest path in $\tilde{R}$ from $b_e$ to the vertex of $C(e)$ on the face 
corresponding to $w^*$ in $\tilde{R}$. Note that by definition of $\Trivor{R}{u}$ such 
a vertex exists. We also remark that since the edges follow shortest 
paths and because of the uniqueness of the shortest paths
they may intersect but not cross (and hence do not contradict 
the definition of $\Trivor{R}{u}$).
  
It follows that there exists a 1-to-1 correspondence between 2-cycle separators going through
$v_{\infty}$ of $\Trivor{R}{u}$ and cycle separators of $\tilde{R}$ consisting of 
an edge $(u,v)$, the shortest paths between $u$ and a boundary vertex $b_1$ and $v$ 
and a boundary vertex $b_2$ and $(b_1,\tilde{v}_\infty)$ and
$(b_2,\tilde{v}_\infty)$. We call the set $\{b_1,u,v,b_2\}$ the \emph{representation} 
of this separator.
This is illustrated in
Figure~\ref{fig:sep2}.
Thus, for any 2-cycle separator $S$ going through $v_{\infty}$, 
we say that the set of vertices of $R$ that is in the interior (resp. exterior) of $S$ 
is the set of vertices of $R$ that lie in the bounded region of the 
place defined by the Jordan curve corresponding to 
the cycle separator in $\tilde{R}$ that is in 1-1 correspondence 
with $S$.

\begin{figure}[htbp]
    \centering
    \includegraphics[width=.5\textwidth]{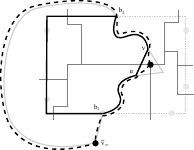}
    \caption{Example of a $2$-cycle separator of $\Trivor{R}{u}$ and its
    corresponding embedding into the actual graph.}
    \label{fig:sep2}
\end{figure}

We can now state the main lemma of this section.

\begin{lemma}
  \label{lem:mainseparator}
  Let $w$ be a vertex of $R$. Assume there exists a data structure that 
  takes as input a the representation $\{b_1,x,y,b_2\}$ 
  of a 2-cycle separator $S$ of 
  $\Vor{R}{u}$ going through $v_{\infty}$ 
  and answers in $t$ time queries of 
  the following form: Is $w$ in the bounded closed subset of the plane
  with boundary $S$?
  Then there exists an algorithm running in time $O(t \log r)$ 
  that returns 
  a set of at most $6$ Voronoi cells of $\Vor{R}{u}$ such that 
  one of them contains $w$.

\end{lemma}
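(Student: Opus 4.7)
The plan is to perform top-down point location on the recursive decomposition of $\Trivor{R}{u}$ given by Lemma~\ref{Lem:Separator}. I would start at the root, whose piece is all of $\Trivor{R}{u}$, and descend one level at a time using a single call to the given data structure per level, maintaining the invariant that $w$ lies inside one of the faces of the current piece and, more specifically, in a face that is not a ``super-face'' produced by the $2$-cycle cut of an ancestor in the decomposition.

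At an internal node with piece $Q$ and balanced $2$-cycle separator $S$ through $v_\infty$, I would feed the representation $\{b_1,x,y,b_2\}$ of $S$ to the data structure in time $t$; the yes/no answer says whether $w$ lies inside the Jordan curve in $\tilde R$ corresponding to $S$, and I would recurse into the child piece on that side. The chosen child consists of the faces of $Q$ on the selected side plus a new super-face bounded by $S$ representing the cut-off side; because $w$ is on the chosen side, $w$ cannot lie in this new super-face, so the invariant is preserved. When the recursion reaches a leaf (a piece with at most six faces, by construction of the decomposition), the invariant forces $w$ to lie in one of the non-super-face faces, which is a triangle of $\Trivor{R}{u}$ embedded inside a unique Voronoi cell of $\Vor{R}{u}$; I would therefore output the at most six Voronoi cells associated with the faces of the leaf piece, one of which must contain $w$.

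For the running time, Lemma~\ref{Lem:Separator} guarantees that each cut is balanced in the face count, so the decomposition tree has depth $O(\log|\Trivor{R}{u}|) = O(\log\sqrt r) = O(\log r)$, since $|\Vor{R}{u}| = O(\sqrt r)$. Along the chosen root-to-leaf path I perform one query of cost $t$ per level, for a total of $O(t\log r)$. The main subtlety is verifying that the ``combinatorial'' sides of $S$ inside $\Trivor{R}{u}$ agree with the ``planar'' sides defined by the bounded region of the Jordan curve in $\tilde R$ that the data structure is querying; this is exactly what the explicit embedding of $\Trivor{R}{u}$ into $\tilde R$ (routing the edges incident to $\tilde v_\infty$ along shortest paths in $\tilde R$) and its induced bijection between $2$-cycle separators through $v_\infty$ and their representations $\{b_1,x,y,b_2\}$ are set up to provide.
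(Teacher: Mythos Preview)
Your proposal is correct and follows essentially the same approach as the paper: descend the recursive decomposition of $\Trivor{R}{u}$, spending one oracle call per level to decide which side of the $2$-cycle separator through $v_\infty$ contains $w$, and return the (at most six) Voronoi cells associated with the faces of the leaf piece. Your ``super-face'' invariant is a slightly cleaner way to phrase what the paper leaves implicit; the only small point the paper makes explicit that you do not is the tie-breaking rule when $w$ lies on the separator itself (recurse on an arbitrary side), but this is harmless for your argument.
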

\begin{proof}
  The algorithm uses the recursive decomposition of $\Trivor{R}{u}$
  described in this section. Note
  that the decomposition consists of 2-cycle separators 
  going through $v_{\infty}$. Thus, using the above embedding, each of the 2-cycle
  of the decomposition corresponds to a separator
  consisting of an edge $(x,y)$ and the shortest paths $P_R(x,b_1)$ and 
  $P_R(y,b_2)$ where $b_1,b_2$ are boundary vertices of $R$.
  Additionally, $y$ belongs to the Voronoi cell of $b_2$ in $\Vor{R}{u}$ and
  $x$ belongs to the Voronoi cell of $b_1$ in $\Vor{R}{u}$. Therefore,
  $P_R(x,b_1)$ and $P_R(y,b_2)$ are vertex disjoint and so the data structure can
  be used to decide on which side of such a separator $w$ is.

  The algorithm is the following: proceed recursively along the recursive
  decomposition of $\Trivor{R}{u}$ and for each $2$-cycle separator of the 
  decomposition use the data structure to decide in $t$ time in which side
  of the 2-cycle $w$ is located and then recurse on this side. If $w$ belongs to both sides, i.e., if $w$ is on the $2$-cycle separator, recurse on an arbitrary side. The algorithm stops when there are at most $6$ faces
  of $\Trivor{R}{u}$ and then it returns the Voronoi cells of $\Vor{R}{u}$ 
  intersecting those $6$ faces.

  Observe that the separators do not cross. 
  Thus, when the algorithm obtains
  at a given recursive call that $w$ is in the interior (resp. exterior) 
  of a 2-cycle $S$ and in the exterior (resp. interior) of the 2-cycle separator
  $S'$ corresponding to the next recursive call, we can deduce that $w$ 
  lies in the intersection of the interior of $S$ and the exterior of $S'$ and
  hence deduce that it belongs to a Voronoi cell that lies in this area of the 
  plane.

  Note that by Lemma~\ref{Lem:Separator} the number of faces of $\Trivor{R}{u}$
  in a piece decreases by a constant factor at each step. Thus, since the number
  of boundary vertices is  $O(\sqrt{r})$, the procedure takes 
  at most $O(t \log r)$ time.
  
  Finally, observe that each face of $\Trivor{R}{u}$ 
  that is adjacent to $v_{\infty}$ lies in a single Voronoi cell of $\Vor{R}{u}$. 
  Thus, since at the end of the recursion there are at most $6$ faces in the piece, 
  they correspond to at most $6$ different Voronoi cells of $\Vor{R}{u}$. Hence the 
  algorithm returns at most $6$ different Voronoi cells of $\Vor{R}{u}$.

\end{proof}

\section{Preprocessing a Region}\label{sec:PreprocRegion}
Given a query separator $S$ in a graph $R_H = R$ and given a query vertex $w$ in $R$, our data structure needs to determine in $O(1)$ time the side of $S$ that $w$ belongs to. In this section, we describe the preprocessing needed for this.

In the following, fix $R$ as well as an ordered pair $(u,v)$ of vertices of $R$ such that either $(u,v)$ or $(v,u)$ is an edge of $R$. The preprocessing described in the following is done over all such choices of $R$ and $(u,v)$ (and all holes $H$).

The vertices of $\delta R$ are on a simple cycle and we identify $\delta R$ with this cycle which we orient clockwise (ignoring the edge orientations of $R$). We let $b_0,\ldots,b_k$ denote this clockwise ordering where $b_k = b_0$. It will be convenient to calculate indices modulo $k$ so that, e.g., $b_{k+1} = b_1$.

Given vertices $w$ and $w'$ 
in $R$, let $P(w,w')$ denote the shortest path in $R$ from $w$ to $w'$. Given two vertices $b_i,b_j\in\delta R$, we let $\delta(b_i,b_j)$ denote the subpath of cycle $\delta R$ consisting of the vertices from $b_i$ to $b_j$ in clockwise order, where $\delta(b_i,b_j)$ is the single vertex $b_i$ if $i = j$ and $\delta(b_i,b_j) = \delta R$ if $j = i + k$. We let $\Delta(w,b_i,b_j)$ denote the subgraph of $R$ contained in the closed and bounded region of the plane with boundary defined by $P(b_i,w)$, $P(b_j,w)$, and $\delta(b_i,b_j)$. We refer to $\Delta(w,b_i,b_j)$ as a \emph{wedge} and call it a \emph{basic wedge} if $b_i$ and $b_j$ are consecutive in the clockwise order, i.e., if $j = i+1\pmod k$. We need the following lemma.
\begin{lemma}\label{Lem:PieQuery}
Let $w$ be a given vertex of $R$. Then there is a data structure with $O(r)$ preprocessing time and size which answers in $O(1)$ time queries of the following form: given a vertex $x\in V(R)$ and two distinct vertices $b_{i_1},b_{i_2}\in\delta R$, does $x$ belong to $\Delta(w,b_{i_1},b_{i_2})$?
\end{lemma}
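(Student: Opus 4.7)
The plan is to exploit the fact that the wedge $\Delta(w,b_{i_1},b_{i_2})$ is precisely the union of the \emph{basic wedges} $\Delta(w,b_j,b_{j+1})$ for $j = i_1, i_1+1, \ldots, i_2-1$ (indices mod $k$). Hence to answer a query it suffices to decide whether $x$ lies in at least one basic wedge whose index lies in the cyclic interval $[i_1, i_2-1]$. I therefore plan to precompute, in $O(r)$ time and space, a pair $(l_x, r_x)$ for each $x \in V(R)$ that compactly describes the set $S_x := \{ j : x \in \Delta(w,b_j,b_{j+1})\}$, and to reduce the query to a constant-time intersection test between two cyclic intervals.

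The key structural fact is that $S_x$ is always a contiguous cyclic subrange of $\{0,\ldots,k-1\}$, which allows storing it as $(l_x,r_x)$ in $O(1)$ space per vertex. To see this, let $T$ be the shortest-path tree rooted at $w$ in the reverse of $R$. Uniqueness of shortest paths implies that $P(b_i, w)$ is the $b_i$-to-$w$ path in $T$ and that two such paths coincide from the point where they meet. Planarity then forces the set $\{i : x \in P(b_i, w)\} = \{i : b_i \in T(x)\}$ to form a contiguous cyclic arc of boundary vertices, say $[i_s,i_e]$; and the basic wedges containing $x$ are precisely those $\Delta(w, b_j, b_{j+1})$ for $j \in [i_s-1, i_e]$ (when the arc is empty, $x$ lies in the interior of a single basic wedge and $S_x$ is a single index).

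To compute all pairs $(l_x,r_x)$ I would first build $T$ in $O(r)$ time with the linear-time planar SSSP algorithm of Henzinger et al. Next, for every inner face $f$ of $R$, determine the index $W(f)$ of the basic wedge containing $f$. This is done by a BFS on the dual of $R$ that never crosses an edge of $E^* := E(\delta R) \cup \bigcup_i E(P(b_i,w))$: edges outside $E^*$ are strictly interior to a basic wedge, so the connected components of this restricted dual correspond bijectively to basic wedges, and each component can be seeded from a face incident to some $(b_j, b_{j+1}) \in E(\delta R)$ which clearly lies in basic wedge $j$. Finally, for each vertex $x$ iterate over the faces incident to $x$ in the cyclic order given by the embedding and take the cyclic range of $W$-values encountered; by the structural claim this is a single cyclic interval, and the work is $O(\deg(x))$ per vertex, hence $O(r)$ total. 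The query is then immediate: answer YES iff the cyclic intervals $[l_x,r_x]$ and $[i_1,i_2-1]$ overlap, a test that runs in $O(1)$.

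The main obstacle, and the step I would treat most carefully, is establishing the contiguity of $S_x$ and the consistency of the face-labeling step in the presence of degenerate configurations, for instance when some $P(b_j,w)$ begins with an edge of $\delta R$ (so basic wedge $j{-}1$ or $j$ may degenerate to a curve) or when a triangular face of $R$ is incident to two consecutive edges of $\delta R$. A clean way to discharge these cases is to prove, by induction along $T$ ordered by the planar embedding, that the children of any vertex yield boundary-vertex descendants in a matching cyclic order; this gives contiguity of $D(v) = \{i : b_i \in T(v)\}$ and of $S_x$ simultaneously, and also provides an alternative tree-DP route to compute $(l_x,r_x)$ — assigning dangling subtrees of $T$ their basic-wedge index directly from the parent during a single DFS — that sidesteps the face-labeling subtleties entirely.
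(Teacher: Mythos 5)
Your proposal is correct and its core mechanism is the same as the paper's: attach to each vertex $x$ a cyclically contiguous interval of boundary indices encoding which wedges contain $x$, then answer a query with a constant-time interval test. The paper stores for each $v$ the smallest index $i_v$ with $v\in\Delta(w,b_0,b_{i_v})$, plus the symmetric index for wedges anchored at $b_k$, and answers a general query via a case split on whether $b_0\in\delta(b_{i_1},b_{i_2})$ (expressing the query wedge as a union or an intersection of two cumulative wedges). Your $(l_x,r_x)$ encoding carries identical information, and the cyclic-overlap test folds the case split into a single comparison --- a mild but genuine simplification of the query step. Where you diverge is in the $O(r)$ construction. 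The paper runs a primal sweep over the SSSP tree $T$: for $i=1,\dots,k$ it walks $P(b_i,w)$ until hitting a previously marked vertex $v_i$ and assigns index $i$ to the hanging subtrees in the slab between $P(b_{i-1},v_i)$ and $P(b_i,v_i)$, with a telescoping argument giving $O(r)$ total. Your dual-BFS face labeling is a plausible alternative but, as you note yourself, breaks in degenerate configurations: when a basic wedge $j$ has empty interior (e.g.\ when $P(b_{j+1},w)$ begins with the boundary edge to $b_j$), the inner triangle incident to $(b_j,b_{j+1})$ actually lies in a different wedge and the proposed seeding rule would mislabel it. Your fallback --- a single DFS over $T$ that assigns basic-wedge indices to hanging subtrees in embedding order --- is essentially the paper's sweep reorganized into one pass and avoids these pitfalls; that is the route to take. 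The contiguity of $S_x$, which both approaches need, follows as you indicate from the non-crossing of the paths $P(b_i,w)$, which is exactly where shortest-path uniqueness enters in both proofs.
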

\begin{proof}
Below we present a data structure with the bounds in the lemma which only answers restricted queries of the form ``does $x$ belong to $\Delta(w,b_0,b_i)$?'' for query vertices $x\in V(R)$ and $b_i\in\delta R$. In a completely symmetric manner, we obtain a data structure for restricted queries of the form ``does $x$ belong to $\Delta(w,b_i,b_k)$?'' for query vertices $x\in V(R)$ and $b_i\in\delta R$. We claim that this suffices to show the lemma. For consider a query consisting of $x\in V(R)$ and $b_i,b_j\in\delta R$. If $b_0\in\delta(b_i,b_j)$ then $\Delta(w,b_i,b_j) = \Delta(w,b_i,b_k)\cup\Delta(w,b_0,b_j)$ and otherwise, $\delta(w,b_i,b_j) = \Delta(w,b_0,b_j)\cap\Delta(w,b_i,b_k)$. Hence, answering a general query can be done using two restricted queries and checking if $b_0\in\delta(b_1,b_2)$ can be done in constant time by comparing indices of the query vertices.

It remains to present the data structure for restricted queries of the form ``does $x$ belong to $\Delta(w,b_0,b_i)$?''. In the preprocessing step, each $v\in V(R)$ is assigned the smallest index $i_v\in\{0,\ldots,k\}$ for which $v\in\Delta(w,b_0,b_{i_v})$. Clearly, this requires only $O(r)$ space and below we show how to compute these indices in $O(r)$ time.

Consider a restricted query specified by a vertex $x$ of $R$ and a boundary vertex $b_i\in\delta R$ where $0\le i\le k$. Since $x\in\Delta(w,b_0,b_i)$ iff $i_x\le i$, this query can clearly be answered in $O(1)$ time.

It remains to show how the indices $i_v$ can be computed in a total of $O(r)$ time. Let $R'$ be $R$ with all its edge directions reversed. In $O(r)$ time, a SSSP tree $T'$ from $w$ in $R'$ is computed. Let $T$ be the tree in $R$ obtained from $T'$ by reversing all its edge directions; note that all edges of $T$ are directed towards $w$ and for each $v\in V(R)$, the path from $v$ to $w$ in $T$ is a shortest path from $v$ to $w$ in $R$.

Next, $\Delta(w,b_0,b_0) = P(b_0,w)$ is computed and for each vertex $v\in\Delta(w,b_0,b_0)$, set $i_v = 0$. The rest of the preprocessing algorithm consists of iterations $i = 1,\ldots,k$ where iteration $i$ assigns each vertex $v\in V(\Delta(w,b_0,b_i))\setminus V(\Delta(w,b_0,b_{i-1}))$ the index $i_v = i$. This correctly computes indices for all vertices of $R$. In the following, we describe how iteration $i$ is implemented.

First, the path $P(b_i,w)$ is traversed in $T$ until a vertex $v_i$ is encountered which previously received an index. In other words, $v_i$ is the first vertex on $P(b_i,w)$ belonging to $\Delta(w,b_0,b_{i-1})$. Note that $v_i$ is well-defined since $w\in\Delta(w,b_0,b_{i-1})$. Vertices that are in $V(P(b_i,v_i))\setminus\{v_i\}$ or in a subtree of $T$ rooted in a vertex of $V(P(b_i,v_i))\setminus\{v_i\}$ and extending to the right of this path are assigned the index value $i$. Furthermore, vertices belonging to a subtree of $T$ rooted in a vertex of $V(P(b_{i-1},v_i))\setminus\{v_i\}$ and extending to the left of this path are assigned the index value $i$, except those on $P(b_{i-1},v_i)$ (as they belong to $\Delta(w,b_0,b_{i-1})$).

Since $R$ is connected, it follows that the vertices assigned an index of $i$ are exactly those belonging to $V(\Delta(w,b_0,b_i))\setminus V(\Delta(w,b_0,b_{i-1}))$ and that the running time for making these assignments is $O(|V(\Delta(w,b_0,b_i))\setminus V(\Delta(w,b_0,b_{i-1}))| + |P(b_{i-1},v_i) - v_i| + 1)$. Over all $i$, total running time is $O(r)$; this follows by a telescoping sums argument and by observing that vertex sets $V(P(b_{i-1},v_i))\setminus\{v_i\}$ are pairwise disjoint.
\end{proof}

Given distinct vertices $b_{i_1},b_{i_2}\in\delta R$, if $P(b_{i_1},u)$ and $P(b_{i_2},v)$ do
not cross (but may touch and then split), let $\Box(b_{i_1},b_{i_2},u,v)$ denote the subgraph of
$R$ contained in the closed and bounded region of the plane with boundary
defined by $P(b_{i_1},u)$, $P(b_{i_2},v)$, $\delta(b_{i_1},b_{i_2})$, and an edge of $R$
between vertex pair $(u,v)$. In order to simplify notation, we shall omit $u$ and $v$ and simply write $\Box(b_{i_1},b_{i_2})$.

It follows from planarity that there is at most one $b_i\in\delta R$ such that $(u,v)$ belongs to $E(\Delta(u,b_i,b_{i+1}))\setminus E(P(b_{i+1},u))$ when ignoring edge orientations. If $b_i$ exists, we refer to it as
$b_{uv}$; otherwise $b_{uv}$ denotes some dummy vertex not belonging to $R$.

The goal in this section is to determine whether a given query vertex belongs to a given query subgraph $\Box(b_{i_1},b_{i_2})$. The following lemma allows us to decompose this subgraph into three simpler parts as illustrated in
Figure~\ref{fig:canonical}. We will show how to answer containment queries for
each of these simple parts.
\begin{figure}[htbp]
    \centering
    \includegraphics[width=.5\textwidth]{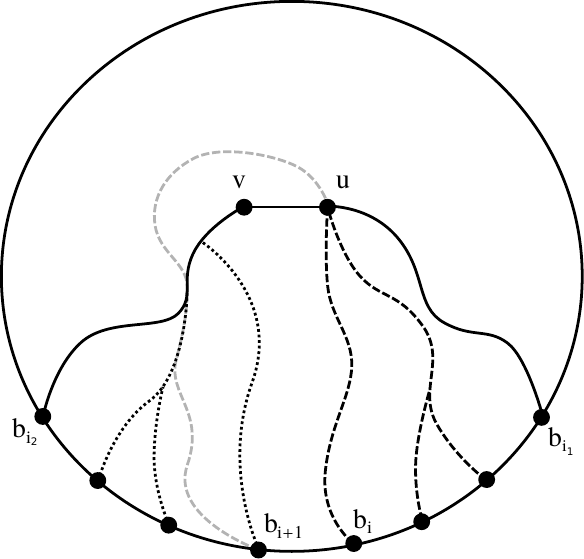}
    \caption{Example of decomposing the region into three parts using $b_i$:
    The dashed wedges represent shortest paths to $u$, the dotted edges
    represent shortest paths to $v$, and the dashed-dotted box
    $\Box(b_i,b_{i+1})$.}
    \label{fig:canonical}
\end{figure}

\begin{lemma}\label{Lem:SepDecomposition}
Let $b_{i_1}$ and $b_{i_2}$ be distinct vertices of $\delta R$ and assume that
    $P(b_{i_1},u)$ and $P(b_{i_2},v)$ are vertex-disjoint. Then $\Box(b_{i_1},b_{i_2}) = \Delta(u,b_{i_1},b_i)\cup\Box(b_i,b_{i+1})\cup\Delta(v,b_{i+1},b_{i_2})$ where $b_i = b_{uv}$ if $b_{uv}\in\delta(b_{i_1},b_{i_2-1})$ and $b_i = b_{i_2-1}$ otherwise.
\end{lemma}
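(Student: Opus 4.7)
The plan is to verify the set equality by a planar-topology argument: both sides are subgraphs of $R$ contained in closed bounded regions of the plane, so it suffices to show that the boundaries of the right-hand side regions fit together to trace out the boundary of $\Box(b_{i_1},b_{i_2})$, and that their interiors are disjoint.

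First, I would handle the case distinction for $b_i$. In the case $b_{uv}\in\delta(b_{i_1},b_{i_2-1})$, setting $b_i=b_{uv}$, the defining property of $b_{uv}$ places the edge $(u,v)$ inside the basic wedge $\Delta(u,b_{uv},b_{uv+1})$, so $P(b_{uv},u)$ and $P(b_{uv+1},v)$ lie on opposite sides of $(u,v)$ and do not cross; hence $\Box(b_i,b_{i+1})$ is well-defined. In the other case, setting $b_i=b_{i_2-1}$, $b_{i+1}=b_{i_2}$, the right wedge $\Delta(v,b_{i_2},b_{i_2})$ degenerates to $P(b_{i_2},v)$, and I would argue that $\Box(b_{i_2-1},b_{i_2})$ is well-defined by observing that the hypothesis $b_{uv}\notin\delta(b_{i_1},b_{i_2-1})$, together with the fact that $(u,v)$ is on the boundary of $\Box(b_{i_1},b_{i_2})$, forces $(u,v)$ to lie on the $v$-side of $P(b_{i_2-1},u)$, so $P(b_{i_2-1},u)$ and $P(b_{i_2},v)$ do not cross.

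Next I would show that $P(b_i,u)$ and $P(b_{i+1},v)$ both lie entirely inside $\Box(b_{i_1},b_{i_2})$ and are vertex-disjoint (except at most at $u$ and $v$, which are not common endpoints). For $P(b_i,u)$: since shortest paths are unique, $P(b_i,u)$ and $P(b_{i_1},u)$ can share only a suffix ending at $u$ and so cannot cross; since $b_i$ lies clockwise of $b_{i_1}$ on $\delta R$ and $P(b_{i_2},v)$ is vertex-disjoint from $P(b_{i_1},u)$, planarity forces $P(b_i,u)$ to remain inside the Jordan region bounded by $P(b_{i_1},u)$, $\delta(b_{i_1},b_{i_2})$, $P(b_{i_2},v)$ and $(u,v)$. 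The same argument applied with the roles of $u$ and $v$ swapped places $P(b_{i+1},v)$ inside the box. Disjointness of $P(b_i,u)$ and $P(b_{i+1},v)$ then follows from the position of $(u,v)$ (which separates them) in the first case, and is immediate in the degenerate second case.

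Finally I would apply the Jordan curve theorem twice. The closed curve $P(b_i,u)\cup(u,v)\cup P(b_{i+1},v)\cup\delta(b_i,b_{i+1})$ bounds exactly $\Box(b_i,b_{i+1})$ and cuts it out of $\Box(b_{i_1},b_{i_2})$. The complement inside the box splits along the two paths into two Jordan regions: one bounded by $P(b_{i_1},u)$, $P(b_i,u)$ and $\delta(b_{i_1},b_i)$, which is $\Delta(u,b_{i_1},b_i)$, and one bounded by $P(b_{i+1},v)$, $P(b_{i_2},v)$ and $\delta(b_{i+1},b_{i_2})$, which is $\Delta(v,b_{i+1},b_{i_2})$. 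Matching these three regions to the right-hand side of the claimed equation finishes the proof. The main obstacle I anticipate is the bookkeeping for Case 2, specifically verifying that when $b_{uv}$ falls outside the expected range, the basic box $\Box(b_{i_2-1},b_{i_2})$ is still well-defined and truly contains the edge $(u,v)$; every other step reduces to a standard uniqueness-of-shortest-paths plus planarity argument.
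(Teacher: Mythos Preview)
Your overall strategy---show that $P(b_i,u)$ and $P(b_{i+1},v)$ lie inside $\Box(b_{i_1},b_{i_2})$ and are disjoint, then apply a Jordan-curve decomposition---is the same as the paper's. The gap is in how you justify that these two paths stay inside the box.

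For $P(b_i,u)$ you write that uniqueness prevents it from crossing $P(b_{i_1},u)$ and then ``planarity forces $P(b_i,u)$ to remain inside''. Planarity alone does \emph{not} force this: nothing you have said rules out $P(b_i,u)$ crossing $P(b_{i_2},v)$, exiting the box, and then reaching $u$ from the outside (the vertex $u$ sits on the boundary and is accessible from both sides). The paper closes this by an induction over the boundary vertices $b_{i_1},b_{i_1+1},\ldots,b_i$: the inductive step uses that if $P(b_{j'},u)$ crossed $P(b_{i_2},v)$ then the edge $(u,v)$ would lie in the basic wedge $\Delta(u,b_{j'-1},b_{j'})$, contradicting $b_{uv}\notin\delta(b_{i_1},b_{j'-1})$. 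In other words, the definition of $b_{uv}$ is what does the work, not bare planarity.

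The more serious issue is your treatment of $P(b_{i+1},v)$: you invoke ``the same argument with the roles of $u$ and $v$ swapped''. But the construction is \emph{not} symmetric in $u$ and $v$: the index $b_i$ is chosen as $b_{uv}$, not $b_{vu}$, so you have no control over where $(v,u)$ sits among the $v$-wedges to the right of $b_{i+1}$. The symmetric version of the inductive argument above would need $b_{vu}\notin\delta(b_{i+2},b_{i_2})$, which you do not have. This is exactly the case the paper singles out as nontrivial and handles by a separate contradiction argument (tracing a hypothetical exit point $y$ of $P(b_{i+1},v)$ on $P(b_{i_1},u)$ and deriving a contradiction from the position of $v$ relative to $P(b_{i+1},u)$, $P(y,u)$, and $P(b_{i+1},y)$). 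You have also inverted the difficulty: you flag Case~2 (the degenerate $b_i=b_{i_2-1}$) as the main obstacle, whereas in fact that case falls out easily once the inductive containment of the $u$-paths is established; the hard work is entirely in Case~1.
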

\begin{proof}
Figure~\ref{fig:sepProof} gives an illustration of the proof. We first show the following result: given a vertex $b_j\in\delta(b_{i_1+1},b_{i_2})$ such that $b_{uv}\notin\delta(b_{i_1},b_{j-1})$, $P(b_{j'},u)$ is contained in $\Box(b_{i_1},b_{i_2})$ for each $b_{j'}\in\delta(b_{i_1},b_j)$. The proof is by induction on the number $i$ of edges in $\delta(b_{i_1},b_{j'})$. The base case $i = 0$ is trivial since then $b_{j'} = b_{i_1}$ so assume that $i > 0$ and that the claim holds for $i - 1$. If $(v,u)$ is the last edge on $P(b_{j'},u)$, the induction step follows from uniqueness of shortest paths. Otherwise, neither $(u,v)$ nor $(v,u)$ belong to $\Delta(u,b_{j'-1},b_{j'})$ (since $b_{j'-1}\ne b_{uv}$). By the induction hypothesis, $P(b_{j'-1},u)$ is contained in $\Box(b_{i_1},b_{i_2})$ and since $P(b_{j'},u)$ cannot cross $P(b_{j'-1},u)$, $P(b_{j'},u)$ cannot cross $P(b_{i_1},u)$. Also, $P(b_{j'},u)$ cannot cross $P(b_{i_2},v)$ since then either $(u,v)$ or $(v,u)$ would belong to $\Delta(u,b_{j'-1},b_{j'})$. Since $b_{j'}\notin\{b_{i_1+1},b_{i_2-1}\}$, it follows that $P(b_{j'},u)$ is contained in $\Box(b_{i_1},b_{i_2})$ which completes the proof by induction.

Next, assume that $b_{uv}\notin\delta(b_{i_1},b_{i_2-1})$ so that $b_i = b_{i_2-1}$. Note that $\Delta(v,b_{i+1},b_{i_2}) = P(b_{i_2},v)$. Picking $b_j = b_{i_2}$ above implies that $\Delta(u,b_{i_1},b_i)$ is contained in $\Box(b_{i_1},b_{i_2})$ and hence $\Box(b_{i_1},b_{i_2}) = \Delta(u,b_{i_1},b_i)\cup\Box(b_i,b_{i+1})\cup\Delta(v,b_{i+1},b_{i_2})$.


Now consider the other case of the lemma where $b_i = b_{uv}\in\delta(b_{i_1},b_{i_2-1})$. Picking $b_j = b_i$ above, it follows that $\Delta(u,b_{i_1},b_i)$ is contained in $\Box(b_{i_1},b_{i_2})$. It suffices to show that $P(b_{i+1},v)$ is contained in $\Box(b_{i_1},b_{i_2})$ since this will imply that $\Box(b_i,b_{i+1})$ is well-defined and that $\Box(b_i,b_{i+1})\cup\Delta(v,b_{i+1},b_{i_2})$ is contained in $\Box(b_{i_1},b_{i_2})$ and hence that $\Delta(u,b_{i_1},b_i)\cup\Box(b_i,b_{i+1})\cup\Delta(v,b_{i+1},b_{i_2}) = \Box(b_{i_1},b_{i_2})$.

Assume for contradiction that $P(b_{i+1},v)$ is not contained in $\Box(b_{i_1},b_{i_2})$.

By uniqueness of shortest paths, $P(b_{i+1},u)$ does not cross $P(b_i,u)$. Since $b_i = b_{uv}$, we have that when ignoring edge orientations, $(u,v)$ belongs to $E(\Delta(u,b_i,b_{i+1}))\setminus E(P(b_{i+1},u))\subseteq E(\Delta(u,b_{i_1},b_{i+1}))\setminus E(P(b_{i+1},u))$. Hence $P(b_{i+1},u)$ is not contained in $\Box(b_{i_1},b_{i_2})$ so it crosses $P(b_{i_2},v)$. Let $x$ be a vertex on $P(b_{i+1},u)\cap P(b_{i_2},v)$ such that the successor of $x$ on $P(b{i+1},u)$ is not contained in $\Box(b_{i_1},b_{i_2})$. By our assumption above that $P(b_{i+1},v)$ is not contained in $\Box(b_{i_1},b_{i_2})$, there is a first vertex $y$ on $P(b_{i+1},v)$ such that its successor $y'$ does not belong to $\Box(b_{i_1},b_{i_2})$. By uniqueness of shortest paths, $y$ cannot belong to $P(b_{i_2},v)$ so it must belong to $P(b_{i_1},u)$. This also implies that $y\neq x$ since $x\in P(b_{i_2},v)$ and $P(b_{i_1},u)$ and $P(b_{i_2},v)$ are vertex-disjoint. Since $P(y,v)$ is a subpath of $P(b_{i+1},v)$ and $y\neq x$, shortest path uniqueness implies that $P(y,v)$ and $P(b_{i+1},u)$ are vertex-disjoint.

Since $P(b_{i+1},y)$ is contained in $\Box(b_{i_1},b_{i_2})$, $v$ belongs to the subgraph of $\Delta(u,b_{i_1},b_{i+1})$ contained in the closed region of the plane bounded by $P(b_{i+1},y)$, $P(y,u)$, and $P(b_{i+1},u)$. Since $P(y,v)$ does not intersect $P(b_{i+1},u)$, $P(y',v)$ thus intersects either $P(b_{i+1},y)$ or $P(y,u)$. However, it cannot intersect $P(b_{i+1},y)$ since then $P(b_{i+1},v)$ would be non-simple. By uniqueness of shortest paths, $P(y',v)$ also cannot intersect $P(y,u)$ since $P(y',v)$ is a subpath of $P(y,v)$ and $y'\notin P(y,u)$. This gives the desired contradiction, concluding the proof.
\end{proof}
\begin{figure}[htbp]
    \centering
    \includegraphics[width=.5\textwidth]{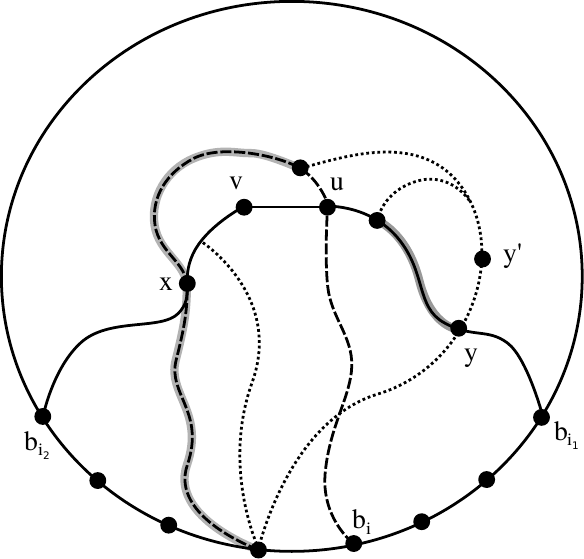}
    \caption{Illustration of the proof of Lemma~\ref{Lem:SepDecomposition}. The
    figure shows how unique shortest paths imply a contradiction (highlighted
    with grey) if the path
    from $b_{i+1}$ to $v$ is not contained in $\Box(b_{i_1},b_{i_2})$.}
    \label{fig:sepProof}
\end{figure}

Let $\mathcal P$ be a collection of subpaths such that for each path $\delta(b_{i_1},b_{i_2})$ in $\mathcal P$, $b_{uv}\notin\delta(b_{i_1},\ldots,b_{i_2-1})$ and $b_{vu}\notin\delta(b_{i_1+1},\ldots,b_{i_2-1})$. We may choose the paths such that $|\mathcal P| = O(1)$ and such that all edges of $\delta R$ except $(b_{uv},b_{vu})$ (if it exists) belongs to a path of $\mathcal P$. It is easy to see that this is possible by considering a greedy algorithm which in each step picks a maximum-length path which is edge-disjoint from previously picked paths and which satisfies the two stated requirements.

The next lemma allows us to obtain a compact data structure to answer queries of the form ``does face $f$ belong to $\Box(b_i,b_{i+1})$'' for given query face $f$ and query index $i$.
\begin{lemma}\label{Lem:CompressedStrips}
Let $P = \delta(b_{i_1},b_{i_2})\in\mathcal P$ be given. Then
\begin{enumerate}
\item an index $j(P)$ exists with $i_1\le j(P)\le i_2$ such that $\Box(b_i,b_{i+1})$ is undefined for $i_1\le i < j(P)$ and well-defined for $j(P)\le i < i_2$,
\item for each face $f\ne\delta R$ of $R$, there is at most one index $j_f(P)$ with $j(P)\le j_f(P)\le i_2 - 2$ such that $f\subseteq\Box(b_{j_f(P)},b_{j_f(P)+1})$ and $f\nsubseteq\Box(b_{j_f(P)+1},b_{j_f(P)+2})$, and
\item for each face $f\ne\delta R$ of $R$, there is at most one index $j_f'(P)$ with $j(P)\le j_f'(P)\le i_2 - 2$ such that $f\nsubseteq\Box(b_{j_f'(P)},b_{j_f'(P)+1})$ and $f\subseteq\Box(b_{j_f'(P)+1},b_{j_f'(P)+2})$.
\end{enumerate}
Furthermore, there is an algorithm which computes the index $j(P)$ and for each face $f\ne\delta R$ of $R$ the indices $j_f(P)$ and $j_f'(P)$ if they exist. The total running time of this algorithm is $O(r\log r)$ and its space requirement is $O(r)$.
\end{lemma}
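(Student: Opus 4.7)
The plan is to derive the three structural claims from a single planar ``sweep'' principle, and then describe a linear-time incremental algorithm on top of it. For a face $f$ of $R$ and index $i\in[i_1,i_2-1]$, let me define $L_f(i)=1$ if $f$ lies on the side of the shortest path $P(b_i,u)$ not containing $\Box(b_i,b_{i+1})$, and $R_f(i)=1$ if $f$ lies on the side of $P(b_{i+1},v)$ not containing $\Box(b_i,b_{i+1})$, where the two sides are defined via the Jordan curve $P(b_i,u)\cup\delta(b_i,b_{i+1})\cup P(b_{i+1},v)\cup(v,u)$ whenever $\Box(b_i,b_{i+1})$ is well-defined. The key planar claim I would prove is that $L_f$ is monotone non-decreasing and $R_f$ is monotone non-increasing in $i$. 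This uses (i) uniqueness of shortest paths, so that $\{P(b_i,u)\}_i$ all lie in a common reverse SSSP tree $T_u$ rooted at $u$ and are pairwise non-crossing (and analogously for $T_v$), and (ii) the hypothesis on $P$ that $b_{uv},b_{vu}\notin\delta(b_{i_1+1},\dots,b_{i_2-1})$, which prevents the edge $(u,v)$ from obstructing the sweep.

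Applying the monotonicity to the ``face at infinity'' shows that the set of $i$ for which $\Box(b_i,b_{i+1})$ is undefined is a prefix of $[i_1,i_2-1]$, giving Part~1. For every ordinary face $f$ in the well-defined range, $f\subseteq \Box(b_i,b_{i+1})$ iff $L_f(i)=R_f(i)=0$, so monotonicity forces the set of such $i$ to be a single (possibly empty) contiguous interval; its right endpoint is the unique candidate for $j_f(P)$, and its left endpoint minus one is the unique candidate for $j_f'(P)$, proving Parts 2 and 3.

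For the algorithm, I would first compute in $O(r)$ time the reverse SSSP trees $T_u,T_v$ of $R$ rooted at $u,v$ via Henzinger et al., preprocessed for $O(1)$-time LCA queries. To find $j(P)$, binary search on $i$: at each probe decide in $O(r)$ time whether $P(b_i,u)$ and $P(b_{i+1},v)$ share a vertex by marking one path and scanning the other, for $O(r\log r)$ total. To compute $j_f(P)$ and $j_f'(P)$, I would sweep $i$ from $j(P)$ to $i_2-2$ and at each step compute the symmetric difference $\Box(b_i,b_{i+1})\triangle\Box(b_{i+1},b_{i+2})$. By the monotonicity this splits into two disjoint ``strips'': one bounded by $P(u,b_i)$, $P(u,b_{i+1})$ and the arc $\delta(b_i,b_{i+1})$, whose faces all acquire $j_f(P)=i$; and one bounded by $P(b_{i+1},v)$, $P(b_{i+2},v)$ and the arc $\delta(b_{i+1},b_{i+2})$, whose faces all acquire $j_f'(P)=i$. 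Each strip is enumerated by a BFS in $R$ starting at the new boundary edges and stopping at the old boundary paths. Every face is visited at most twice in total by Parts 2--3, so the face-work sums to $O(r)$; the path-side work is charged to the minimal subtree of $T_u$ (resp.\ $T_v$) spanning $u$ (resp.\ $v$) and all boundary vertices in $P$, which has $O(r)$ edges; and the $O(\log r)$ LCA lookup per step yields the claimed $O(r\log r)$ total time. Space is $O(r)$ since only the two trees, the current sweep state, and a single output pair per face need to be stored.

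The main obstacle will be the rigorous proof of the monotonicity of $L_f$ and $R_f$. The subtlety is that $P(b_i,u)$ and $P(b_{i+1},u)$ share a common suffix from their $T_u$-LCA $z_i$ to $u$, so the naive ``angular rotation around $u$'' picture is not quite right near $u$. The correct argument is local: the two paths split at $z_i$ into vertex-disjoint pieces that bound an open planar region $W_i$, and the constraint on $b_{uv},b_{vu}$ forces every face of $W_i$ to lie on a consistent side of $P(b_i,u)$; chaining these pairwise observations by induction on $i$ yields the monotonicity of $L_f$, and a symmetric argument in $T_v$ yields the monotonicity of $R_f$.
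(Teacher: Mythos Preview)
Your plan is essentially the paper's approach, just reframed via the monotone indicators $L_f,R_f$; the paper states the equivalent fact more directly, namely that the faces in $\Box(b_i,b_{i+1})\setminus\Box(b_{i+1},b_{i+2})$ are exactly those of the basic wedge $\Delta(u,b_i,b_{i+1})$ (and symmetrically with $v$ for the other direction), and the sweep algorithm is the same traversal of the reverse SSSP tree along successive boundary vertices.

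One genuine presentational gap: your argument for Part~1 is circular as written. You define $L_f(i),R_f(i)$ only when $\Box(b_i,b_{i+1})$ is well-defined, so their monotonicity cannot be invoked (even for the ``face at infinity'') to conclude that well-definedness itself holds on a suffix. The paper handles Part~1 by a direct induction: if $\Box(b_i,b_{i+1})$ is well-defined then, using $b_i\neq b_{uv}$, the path $P(b_{i+1},u)$ lies inside $\Box(b_i,b_{i+1})$, and using $b_{i+1}\neq b_{vu}$, the path $P(b_{i+2},v)$ lies on the complementary side; hence $P(b_{i+1},u)$ and $P(b_{i+2},v)$ do not cross and $\Box(b_{i+1},b_{i+2})$ is well-defined. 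Your own ingredients (non-crossing of reverse-SSSP-tree paths together with the $b_{uv},b_{vu}$ hypothesis) suffice to run exactly this induction, so the fix is to prove Part~1 first by that induction and only then define and use $L_f,R_f$ on the resulting suffix. Also note the hypothesis on $\mathcal P$ is asymmetric: $b_{uv}\notin\delta(b_{i_1},\dots,b_{i_2-1})$ but $b_{vu}\notin\delta(b_{i_1+1},\dots,b_{i_2-1})$; make sure your induction uses the correct one at each step.
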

\begin{proof}
Let path $P = \delta(b_{i_1},b_{i_2})\in\mathcal P$ and face $f\ne\delta R$ of $R$ be given. To simplify notation in the proof, we shall omit reference to $P$ and write, e.g., $j$ instead of $j(P)$.

Let $j$ be the smallest index such that $\Box(b_j,b_{j+1})$ is well-defined; if $j$ does not exist, pick instead $j = i_2$. We will show that $j$ satisfies the first part of the lemma. This is clear if $j = i_2$ so assume therefore in the following that $j < i_2$.

We prove by induction on $i$ that $\Box(b_i,b_{i+1})$ is well-defined for $j\le i < i_2$. By definition of $j$, this holds when $i = j$. Now, consider a well-defined subgraph $\Box(b_i,b_{i+1})$ where $j\le i \le i_2 - 2$. We need to show that $\Box(b_{i+1},b_{i+2})$ is well-defined. Since $b_i\ne b_{uv}$, $P(b_{i+1},u)$ is contained in $\Box(b_i,b_{i+1})$ and since $b_{i+1}\ne b_{vu}$, $P(b_{i+2},v)$ is contained in the closed region of the plane bounded by the boundary of $\Box(b_i,b_{i+1})$ and not containing $\Box(b_i,b_{i+1})$. In particular, $\Box(b_{i+1},b_{i+2})$ is well-defined. This shows the first part of the lemma.

Next, we show that $j$ can be computed in $O(r\log r)$ time. Checking that $\Box(b_i,b_{i+1})$ is well-defined (i.e., that paths $P(b_i,u)$ and $P(b_{i+1},v)$ do not cross) for a given index $i$ can be done in $O(r)$ time. Because of the first part of the lemma, a binary search algorithm can be applied to identify $j$ in $O(\log r)$ steps where each step checks if $\Box(b_i,b_{i+1})$ is well-defined for some index $i$. This gives a total running time of $O(r\log r)$, as desired. Space is clearly $O(r)$.

To show the second part of the lemma, assume that there is an index $j_f$ with $i_1\le j_f\le i_2 - 2$ such that $f\subseteq\Box(b_{j_f},b_{j_f+1})$ and $f\nsubseteq\Box(b_{j_f+1},b_{j_f+2})$. It follows from the observations in the inductive step above that $\Delta(u,b_{j_f},b_{j_f+1})$ contains exactly the faces of $R$ contained in $\Box(b_{j_f},b_{j_f+1})$ and not in $\Box(b_{j_f+1},b_{j_f+2})$ which implies that $f\subseteq\Delta(u,b_{j_w},b_{j_w+1})$. Since no face of $R$ belongs to more than one graph of the form $\Delta(u,b_i,b_{i+1})$, $j_f$ must be unique, showing the second part of the lemma.

Next, we give an $O(r)$ time and space algorithm that computes indices $j_f$. Let $T$ be constructed as in the proof of Lemma~\ref{Lem:PieQuery}. Initially, vertices of $P(b_{i_2-1},u)$ are marked and all other vertices of $R$ are unmarked. The remaining part of the algorithm consists of iterations $i_2 - 2,\ldots,j$ in that order. In iteration $i$, $P(b_i,u)$ is traversed until a marked vertex $v_i$ is visited and then the vertices of $P(b_i,v_i)$ are marked. The faces of $R$ contained in the bounded region of the plane defined by $P(b_i,v_i)$, $P(b_{i+1},v_i)$, and $\delta(b_i,b_{i+1})$ are exactly those that should be given an index value of $i$. The algorithm performs this task by traversing each subtree of $T$ emanating to the right of $P(b_i,v_i)$ and each subtree of $T$ emanating to the left of $P(b_{i+1},v_i)$; for each vertex visited, the algorithm assigns the index value $i$ to its incident faces.

We now show that the algorithm for computing indices $j_f$ has $O(r)$ running time. Using the same arguments as in the proof of Lemma~\ref{Lem:PieQuery}, the total time to traverse and mark paths $P(b_i,v_i)$ is $O(r)$. The total time to assign indices to faces is $O(r)$; this follows by observing that the time spent on assigning indices to faces incident to a vertex of $T$ is bounded by its degree and this vertex is not visited in other iterations.

The third part of the lemma follows with essentially the same proof as for the second part.
\end{proof}

We can now combine the results of this section to obtain the data structure described in the following lemma.
\begin{lemma}\label{Lem:RegionDS}
Let $(u,v)$ be a vertex pair connected by an edge in $R$. Then there is a data structure with $O(r\log r)$ preprocessing time and $O(r)$ space which answers in $O(1)$ time queries of the
    following form: given a vertex $w\in R$ and two distinct vertices $b_i,
    b_j\in\delta R$ such that $P(b_i,u)$ and $P(b_j,v)$ are vertex-disjoint, does $w$
    belong to $\Box(b_i,b_j,u,v)$?
\end{lemma}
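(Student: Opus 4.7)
My plan is to combine the three tools already developed in this section. By Lemma~\ref{Lem:SepDecomposition}, a query box $\Box(b_i,b_j,u,v)$ decomposes into $\Delta(u,b_i,b_\ell)\cup\Box(b_\ell,b_{\ell+1})\cup\Delta(v,b_{\ell+1},b_j)$, where $b_\ell=b_{uv}$ if $b_{uv}\in\delta(b_i,b_{j-1})$ and $b_\ell=b_{j-1}$ otherwise. It therefore suffices to decide each of the three sub-containments in $O(1)$ and OR the answers. At preprocessing time I compute $b_{uv}$ (in $O(r)$ time by inspecting the rotation around $u$ for the edge $(u,v)$) and store the cyclic indices of all boundary vertices of $\delta R$, so that $b_\ell$ can be obtained in $O(1)$ per query from index comparisons modulo $k$.

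For the two wedge sub-queries I build the data structure of Lemma~\ref{Lem:PieQuery} twice, once with apex fixed to $u$ and once with apex fixed to $v$; each instance uses $O(r)$ time and space and answers wedge containment in $O(1)$. For the basic-box sub-query $w\in\Box(b_\ell,b_{\ell+1})$ I invoke Lemma~\ref{Lem:CompressedStrips} on each path of the $O(1)$-sized cover $\mathcal P$ defined before that lemma, obtaining $j(P)$ and the pairs $(j_f(P),j_f'(P))$ for every face $f$ of $R$ in total time $O(r\log r)$ and space $O(r)$. I then lift these face labels to vertices: for every $x\in V(R)$ and every $P\in\mathcal P$ I precompute two indices $\alpha_x(P)=\min_{f\ni x}(j_f'(P)+1)$ and $\beta_x(P)=\max_{f\ni x} j_f(P)$, with the natural conventions when no such face index exists (using $j(P)$ as the left end and the right endpoint of $P$ as the right end). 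A single pass over the face--vertex incidences of $R$ completes this in $O(r)$ time and space. At query time I identify which $P\in\mathcal P$ contains $b_\ell$ and test $\alpha_w(P)\le\ell\le\beta_w(P)$ in $O(1)$.

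The correctness of this $O(1)$ test rests on the fact that for each fixed $w\in V(R)$ and each $P\in\mathcal P$ the set $\{\ell:w\in\Box(b_\ell,b_{\ell+1})\}$ is a contiguous interval along $P$, so that it is pinned down by its smallest and largest elements. This monotonicity is essentially the sweep structure already exploited in the proof of Lemma~\ref{Lem:CompressedStrips}: advancing $\ell$ by one removes from the basic box exactly the faces of $\Delta(u,b_\ell,b_{\ell+1})$ and inserts exactly the faces of $\Delta(v,b_{\ell+1},b_{\ell+2})$, these two wedges are face-disjoint, and each face therefore enters and leaves the basic box at most once, a property that transfers to vertices via their incident faces. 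Pinning down this monotonicity cleanly is the main obstacle; everything else is bookkeeping. One final subtlety is the at most one edge $(b_{uv},b_{vu})$ of $\delta R$ that is excluded from $\mathcal P$: for the corresponding basic box $\Box(b_{uv},b_{vu})$ I precompute, in $O(r)$ time via a traversal of $R$ seeded at the two triangles incident to $(u,v)$, an $O(r)$-size bit table that stores for each vertex whether it lies in this one box, and I dispatch to this table when the query lands in this case. Summing the three sub-structures yields preprocessing $O(r\log r)$, space $O(r)$, and query time $O(1)$, as claimed.
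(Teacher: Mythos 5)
You reuse the paper's three ingredients (Lemma~\ref{Lem:SepDecomposition} to split the query box, two instances of Lemma~\ref{Lem:PieQuery} for the wedges, and Lemma~\ref{Lem:CompressedStrips} for the basic boxes) in exactly the roles the paper gives them, so the overall architecture is the same. The one place where you deviate is the basic-box sub-query, and that is where a genuine gap sits. You want to store, per vertex $w$ and per path $P\in\mathcal P$, a single interval $[\alpha_w(P),\beta_w(P)]$ and decide $w\in\Box(b_\ell,b_{\ell+1})$ by testing $\ell$ against it. That is only correct if, for every $w$, the set $\{\ell : w\in\Box(b_\ell,b_{\ell+1})\}$ is a contiguous interval along $P$. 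You flag this as ``the main obstacle'' but your justification (each face enters/leaves once, ``a property that transfers to vertices via their incident faces'') is not a proof: the union of per-face intervals over the faces incident to $w$ need not be an interval without a further geometric argument. Lemma~\ref{Lem:CompressedStrips} gives monotonicity for faces only; extending it to vertices requires additional work (e.g., showing $w$ lies in $\Box(b_\ell,b_{\ell+1})$ iff $w$ avoids the interiors of both the growing left wedge $\Delta(u,b_{i_1},b_\ell)$ and the shrinking right wedge $\Delta(v,b_{\ell+1},b_{i_2})$, each of which is a one-sided threshold condition in $\ell$). You never establish that.

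The paper avoids needing any vertex-level monotonicity at all, and this is the observation you are missing. After Lemma~\ref{Lem:SepDecomposition} splits the box, the oracle first asks $\mathcal D_u$ and $\mathcal D_v$ whether $w$ lies in one of the two wedges. If it does not, then $w$ cannot lie on the boundary of $\Box(b_i,b_{i+1})$ — that boundary consists of $P(b_i,u)$, $P(b_{i+1},v)$, the edge $(u,v)$, and $\delta(b_i,b_{i+1})$, all of which are contained in the two wedges — so $w$ is either strictly inside or strictly outside the basic box. Hence \emph{one arbitrary} incident face $f$ of $w$ is contained in $\Box(b_i,b_{i+1})$ iff $w$ is, and the per-face interval data of Lemma~\ref{Lem:CompressedStrips} (together with the stored endpoint face sets $F_{j(P)}$, $F_{i_2-1}$, $F_{uv}$) answers that in $O(1)$. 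No min/max over incident faces is needed, and no ``lifting'' to vertices. A secondary flaw in your construction: taking $\alpha_w(P)=\min_{f\ni w}(j_f'(P)+1)$ and $\beta_w(P)=\max_{f\ni w}j_f(P)$ with your stated default conventions will silently include faces that never belong to any basic box along $P$ (for such $f$ both $j_f(P)$ and $j_f'(P)$ are undefined, and your default turns ``never present'' into ``always present''), which can produce a spurious positive interval. You would have to filter to faces that actually appear, which again requires consulting the endpoint sets $F_{j(P)}$ and $F_{i_2-1}$ — at which point you have essentially re-derived the paper's four-case check.
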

\begin{proof}
We present a data structure $\mathcal D(u,v)$ satisfying the lemma. First we
    focus on the preprocessing. Boundary vertices $b_{uv}$ and $b_{vu}$ and set
    $\mathcal P$ as defined above are precomputed and stored. Vertices of $\delta R$ are labeled with indices $b_0,\ldots,b_{|V(\delta R)|-1}$ according to a clockwise walk of $\delta R$. Each path of the form $\delta(b_{i_1},b_{i_2})$ (including each path in $\mathcal P$) is represented by the ordered index pair $(i_1,i_2)$. Checking if a given boundary vertex belongs to such a given path can then be done in $O(1)$ time.

Next, two instances of the data structure in
    Lemma~\ref{Lem:PieQuery} are set up, one for $u$ denoted $\mathcal D_u$,
    and one for $v$ denoted $\mathcal D_v$. Then the following is done for each path $P = \delta(b_{i_1},b_{i_2})\in\mathcal P$. First, the algorithm in Lemma~\ref{Lem:CompressedStrips} is applied. Then if $\Box(b_{j(P)},b_{j(P)+1},u,v)$ is well-defined, its set of faces $F_{j(P)}$ is computed and stored; otherwise, $F_{j(P)} = \emptyset$. Similarly, if $\Box(b_{i_2-1},b_{i_2},u,v)$ is well-defined, its set of faces $F_{i_2-1}$ is computed and stored, and otherwise $F_{i_2-1} = \emptyset$. If $(b_{uv},b_{vu})\in\delta R$, $\mathcal D(u,v)$ computes and stores the set $F_{uv}$ of faces of $R$ contained in $\Box(b_{uv},b_{vu},u,v)$. This completes the
    description of the preprocessing for $\mathcal D(u,v)$. It is clear that
    preprocessing time is $O(r\log r)$ and that space is $O(r)$.

Now, consider a query specified by a vertex $w\in R$ and two distinct vertices
    $b_{i_1},b_{i_2}\in\delta R$ such that $P(b_{i_1},u)$ and $P(b_{i_2},v)$ are pairwise vertex-disjoint. First, $\mathcal D(u,v)$ identifies a boundary vertex
    $b_i$ such that $\Box(b_{i_1},b_{i_2},u,v) =
    \Delta(u,b_{i_1},b_i)\cup\Box(b_i,b_{i+1})\cup\Delta(v,b_{i+1},b_{i_2})$; this is
    possible by Lemma~\ref{Lem:SepDecomposition}. Then $\mathcal D_u$ and
    $\mathcal D_v$ are queried to determine if
    $w\in\Delta(u,b_{i_1},b_i)\cup\Delta(v,b_{i+1},b_{i_2})$; if this is the case then
    $w\in\Box(b_{i_1},b_{i_2},u,v)$ and $\mathcal D(u,v)$ answers ``yes''. Otherwise,
    $\mathcal D(u,v)$ identifies an arbitrary face $f\ne\delta R$ of $R$ incident to $w$. At this point, the only way that $w$ can belong to $\Box(b_{i_1},b_{i_2},u,v)$ is if $w$ belongs to the interior of $\Box(b_i,b_{i+1},u,v)$ which happens iff $f$ is contained in $\Box(b_i,b_{i+1},u,v)$. If $(b_i,b_{i+1}) = (b_{uv},b_{vu})$, $\mathcal D(u,v)$ checks if $f\in F_{uv}$ and if so outputs ``yes''. Otherwise, there exists a path $P\in\mathcal P$ containing $(b_i,b_{i+1})$ and $\mathcal D(u,v)$ identifies this path. It follows from Lemma~\ref{Lem:CompressedStrips} and from the definition of $\mathcal P$ that $f$ is contained in $\Box(b_i,b_{i+1},u,v)$ iff at least one of the following conditions hold:
\begin{enumerate}
\item $j_f(P)$ and $j_f'(P)$ are well-defined and $j_f'(P) < i \le j_f(P)$.
\item $f\in F_{i_2-1}$, $j_f'(P)$ is well-defined, and $i > j_f'(P)$,
\item $f\in F_{j(P)}$, $j_f(P)$ is well-defined, and $i\le j_f(P)$,
\item $f\in F_{j(P)}$ and $j_f(P)$ is undefined,
\end{enumerate}
Data structure $\mathcal D(u,v)$ checks if any one these conditions hold and if
so outputs ``yes''; otherwise it outputs ``no''. Two of the cases are
illustrated in Figure~\ref{fig:path_cases}.

\begin{figure}[htbp]
  \centering
  \includegraphics[width=.5\textwidth]{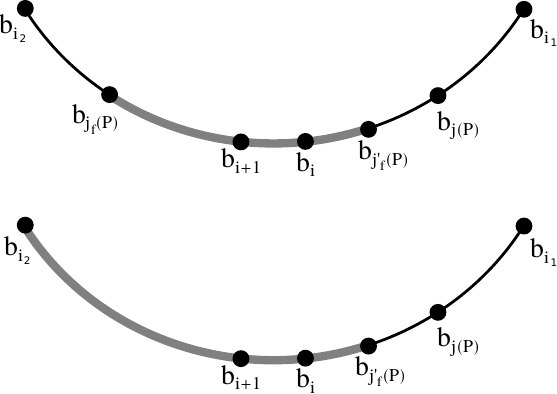}
  \caption{Illustration of how to determine if the face $f$ belongs to
    $\Box(b_i,b_{i+1},u,v)$. The illustration includes cases 1 and 4 in 
    the proof of Lemma~\ref{Lem:RegionDS}.
    The large gray subpath indicates the part, where $f$ is contained in
    each box defined by consecutive boundary nodes.}
  \label{fig:path_cases}
\end{figure}

It remains to show that $\mathcal D(u,v)$ has query time $O(1)$. By Lemma~\ref{Lem:SepDecomposition}, identifying $b_k$ can be done in $O(1)$ time. Querying $\mathcal D_u$ and $\mathcal D_v$ takes $O(1)$ time by Lemma~\ref{Lem:PieQuery}. Checking whether $f\in F_{uv}$ can clearly be done in $O(1)$ time since this set of faces is stored explicitly. With our representation of paths by the indices of their endpoints, identifying $P$ takes $O(1)$ time. Finally, since sets $F_{j(P)}$ and $F_{j(P)}$ are explicitly stored, the four conditions above can be checked in $O(1)$ time.
\end{proof}

\section{The Distance Oracle}
In this section we give a detailed presentation of both our algorithm
for answering distance queries and our distance oracle data structure.

Combining 
Lemmas~\ref{lem:spaceboundDS},~\ref{lem:querytime}
and~\ref{lem:correctness} with $r = n^{2/3}$ 
directly implies Theorem~\ref{Thm:main}.


\subsection{The Data Structure}
We present the algorithm for building our data structure.
\medskip

\textsc{Preprocessing} $G$
\begin{enumerate}
    \itemsep0pt
\item Compute an $r$-division $\calR$ of $G$. Let $\delta$ be the 
  set of all boundary vertices.
\item Store for each internal vertex the region to which it belongs.
\item\label{step:allboundaryvert}
    Compute and store the distances from each vertex to each 
  boundary vertex.
\item\label{step:allinternalpairs} For each region $R \in \calR$, 
  compute and store the distances between any pair
  of internal vertices of $R$.
\item\label{step:recursivedecomp} For each region $R$, for each
  vertex $u \notin R$, for each hole 
  $H$, compute $\text{Vor}_{H}(R,u)$ and store  a
  separator decomposition as described in 
  Section~\ref{sec:recursivedecomp}.
\item\label{step:querysep} 
  For each region $R$, for each edge $(x,y) \in R$, for each hole 
  $H$, 
  compute and store the data structure described in 
  Section~\ref{sec:PreprocRegion}.
\end{enumerate}

\begin{lemma}
  \label{lem:spaceboundDS}
  The total size of the data structure computed by \textsc{Preprocessing}
  is $O(n^2/\sqrt{r} + n \cdot r)$.
\end{lemma}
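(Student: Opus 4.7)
My plan is to bound the space used by each step of \textsc{Preprocessing} separately and then sum the contributions. Before doing so, I would record two standard consequences of the $r$-division properties: the number of regions is $O(n/r)$, and since each region has at most $c_2\sqrt{r}$ boundary vertices, the total number of boundary vertices is $|\delta|=O(n/\sqrt r)$.

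Step 2 contributes $O(n)$ space. Step 3 stores, for each of the $n$ vertices, the distance to each of the $O(n/\sqrt r)$ boundary vertices, for a total of $O(n^2/\sqrt r)$. Step 4 stores an $O(r^2)$ table per region; summed over the $O(n/r)$ regions this gives $O(nr)$. For Step 5, I would appeal to Lemma~\ref{Lem:sepspace}, which states that one recursive decomposition occupies $O(\sqrt r)$ space; there are $O(1)$ holes per region (by the $r$-division definition), $O(n/r)$ regions, and $O(n)$ choices of $u$, giving $O(n/r)\cdot O(n)\cdot O(\sqrt r)=O(n^2/\sqrt r)$. I should also remark that storing $\Vor{R}{u}$ itself has the same $O(\sqrt r)$ per copy and is therefore absorbed. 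For Step 6, Lemma~\ref{Lem:RegionDS} gives $O(r)$ space per data structure; since each region has $O(r)$ edges (it is planar on at most $r$ vertices) and $O(1)$ holes, the total is $O(n/r)\cdot O(r)\cdot O(r)=O(nr)$.

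Adding the six contributions yields $O(n^2/\sqrt r + nr)$, as claimed. The only non-routine point is Step 5, where one must be careful that the decomposition space per $(R,u,H)$-triple really is $O(\sqrt r)$ rather than something depending on $|V(R)|$; this is precisely what Lemma~\ref{Lem:sepspace} guarantees, because the decomposition tree has $O(\sqrt r)$ nodes and each separator consists of only two edges. All other steps are direct products of already-established parameters, so no further obstacle arises.
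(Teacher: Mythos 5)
Your proposal is correct and follows essentially the same route as the paper's proof: bound each preprocessing step separately using the $r$-division parameters ($O(n/r)$ regions, $O(n/\sqrt r)$ boundary vertices) together with Lemma~\ref{Lem:sepspace} for Step~5 and Lemma~\ref{Lem:RegionDS} for Step~6, then sum. The only cosmetic differences are that you explicitly note Step~2's $O(n)$ contribution and the $O(1)$ factor for holes, which the paper leaves implicit.
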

\begin{proof}
  Recall that by definition of the $r$-division, there are $O(n/\sqrt{r})$ 
  boundary vertices and $O(n/r)$ regions. 
  Thus, the number of distances 
  stored at step~\ref{step:allboundaryvert} of the 
  algorithm is at most $O(n^2/\sqrt{r})$.

  For a given region, storing the pairwise distances between all 
  its internal vertices takes $O(r^2)$ space.
  Since there are $O(n/r)$ regions in total, 
  Step~\ref{step:allinternalpairs} 
  takes memory
  $O(n \cdot r)$.

  We now bound the space taken by Step~\ref{step:recursivedecomp}.
  There are $n/r$ choices for $R$ and $n$ choices for $u$. 
  By Lemma~\ref{Lem:sepspace}, each decomposition can be stored using
  $O(\sqrt{r})$ space. Thus, this step takes 
  $O(n^2/\sqrt{r})$ total space.

  We finally bound the space taken by Step~\ref{step:querysep}.
  There $n/r$ choices for $R$ and $r$ choices for an edge $(x,y)$.
  By Lemma~\ref{Lem:RegionDS}, for a given edge $(x,y)$,
  the data structure takes $O(r)$ space.
  Hence, the total space taken by this step is $O(n \cdot r)$ 
  and the lemma follows.
\end{proof}



\begin{theorem}
  \label{thm:preprocesstime}
  The execution of \textsc{Preprocessing} takes 
  $O(n^2)$
  time  and $O(n \cdot r + n^2/\sqrt{r} )$ space.
\end{theorem}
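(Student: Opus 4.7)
The space bound is exactly Lemma~\ref{lem:spaceboundDS}, so my plan concerns only the $O(n^2)$ time bound, which I obtain by walking through the six steps of \textsc{Preprocessing} and summing their running times.

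Steps~1 and~2 are cheap: I would build the $r$-division with the linear-time algorithm of Klein–Mozes–Sommer and then do one pass over the vertices, for a total of $O(n)$. For Step~3 I would run the $O(n)$-time planar SSSP of Henzinger et al.\ from each of the $O(n/\sqrt r)$ boundary vertices in the reverse graph of $G$, contributing $O(n^2/\sqrt r)$. For Step~4 I would run $|V(R)|=O(r)$ intra-region SSSPs per region, costing $O(r^2)$ per region and $O(nr)$ summed over the $O(n/r)$ regions.

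Step~5 is the one I expect to be the main obstacle, since there are $O(n)\cdot O(n/r)\cdot O(1)=O(n^2/r)$ triples $(u,R,H)$ to handle and a naive global SSSP from each $u$ would already cost $O(n^2/r\cdot n)$. The crucial observation is that for a fixed triple the only information needed to define $\Vor{R}{u}$ is the vector of weights $d_G(u,b)$ for $b\in V(H)$, and all of these have already been cached by Step~3. Hence I can compute the Voronoi partition of $R_H$ in $O(r)$ time via one planar multi-source SSSP on $R_H$ starting from $V(H)$ with the cached distances as initial offsets, then extract $\Vor{R}{u}$ by contracting maximal degree-two dual paths. Corollary~\ref{Cor:Separatortime} bounds the cost of the accompanying separator decomposition by $O(\sqrt r\log r)$, which is a lower-order term. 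Summed over all $O(n^2/r)$ triples, Step~5 contributes $O(n^2)$.

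Finally, Step~6 invokes Lemma~\ref{Lem:RegionDS} once per (region, edge, hole) triple at cost $O(r\log r)$ each. With $O(r)$ edges and $O(1)$ holes per region and $O(n/r)$ regions, the total is $O(nr\log r)$, which is $O(n^2)$ whenever $r\log r=O(n)$ and in particular for the eventual choice $r=n^{2/3}$. Adding the five non-trivial contributions, the $O(n^2)$ terms from Steps~3 and~5 dominate, giving the claimed time bound. As flagged above, the delicate point is Step~5, where one has to build each Voronoi diagram in time proportional to $|R|$ rather than $|G|$ by reusing the boundary distances from Step~3; once that is done the remaining accounting is routine.
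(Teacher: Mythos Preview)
Your proposal is correct and mirrors the paper's proof almost step for step: the paper also invokes Klein--Mozes--Sommer for the $r$-division, Henzinger et al.\ for the boundary SSSPs, a local SSSP on $R_H$ with a super-source attached to $V(H)$ (equivalent to your offset multi-source SSSP) for Step~5, and Lemma~\ref{Lem:RegionDS} for Step~6. The one point you glossed over is that Step~4 must compute distances in $G$, not merely in $R$; the paper handles this by adding, for each hole, shortcut edges between its boundary vertices weighted by the global distances already cached in Step~3, which keeps the per-region graph at $O(r)$ size and so does not change your $O(nr)$ bound.
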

\begin{proof}
  We analyze the procedure step by step. Computing an $r$-division 
  with $O(1)$ holes 
  can be done in linear time and space using 
  the algorithm of Klein et al., see~\cite{KleinMS13}.
  
  We now analyze Step~\ref{step:allboundaryvert}. 
  There are $O(n/\sqrt{r})$ boundary vertices. Computing single-source
  shortest paths can be done in linear time using the algorithm
  of Henzinger et al.~\cite{Henzinger97}. Hence, 
  Step~\ref{step:allboundaryvert}
  takes at most $O(n^2/\sqrt{r})$ time and space.
  
  Step~\ref{step:allinternalpairs} takes 
  $O(n \cdot r)$ time and space
  using the following algorithm. The algorithm proceed region by region
  and hole by hole.
  For a given region and hole, the algorithm adds an edge 
  between each pair
  of boundary vertices that are on the hole 
  of length equal to the distance between these
  vertices in the whole graph. Note that this is already in memory 
  and was computed at Step~\ref{step:allboundaryvert}.
  Now, for each vertex of the region, the algorithm runs a 
  shortest path algorithm. Since there are $O(\sqrt{r})$ boundary
  vertices, the number of edges added is $O(r)$. 
  Thus, the algorithm is run on a graph that has at most
  $O(r)$ edges and vertices. 
  The algorithm spends at most $O(r)$
  time per vertex of the region.
  Since there are $O(n/r)$ regions and $O(r)$ vertices
  per region, both the running time and the space
  are $O(n \cdot r)$.

  Step~\ref{step:recursivedecomp} takes $O(n^2)$ using the following 
  algorithm.
  The algorithm proceeds vertex by vertex, region by region, hole
  by hole.
  For a given vertex $u$, a given region $R$, and a given hole $H$
  the algorithm computes $\text{Vor}_{H}(R,u)$. This can be done
  by adding a ``dummy'' vertex reprensenting $u$ and connecting it to 
  each boundary vertex $x$ of the hole by an edge of length $\dist(u,x)$.
  Thus, this takes time $O(r)$ using the single-source shortest path 
  algorithm of Henzinger et al~\cite{Henzinger97}. Furthermore, 
  by Lemma~\ref{Lem:sepspace} and Corollary~\ref{Cor:Separatortime}, 
  computing the separator decomposition of 
  Section~\ref{sec:recursivedecomp} given $\text{Vor}_{H}(R,u)$ 
  takes $\tilde{O}(\sqrt{r})$ time and $O(\sqrt r)$ space.
  Thus, over all vertices, regions and holes, this step takes
  $O(n^2)$ time and $O(n^2/\sqrt{r})$ memory.

  Finally, we show that
  Step~\ref{step:querysep} takes $\tilde{O}(n \cdot r)$ time and 
  $O(n \cdot r)$ space. The algorithm proceeds region by region, 
  hole by hole, and edge by edge. By Lemma~\ref{Lem:RegionDS}, 
  for a given edge of the region, 
  computing the
  data structure of Section~\ref{sec:PreprocRegion} takes 
  $O(r \cdot \log r)$
  time and space. Since the total number of region is $O(n/r)$ 
  and the total number of edges per region is $O(r)$, the proof
  is complete.
\end{proof}

\begin{corollary}\label{cor:cabellotime}
  There exists a distance oracle with total space $O(n^{11/6})$
  and expected preprocessing time $O(n^{11/6})$.
\end{corollary}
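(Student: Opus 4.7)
The plan is to re-run the \textsc{Preprocessing} procedure with the parameter $r = n^{1/3}$ in place of $r = n^{2/3}$. By Lemma~\ref{lem:spaceboundDS}, this choice yields total space $O(nr + n^2/\sqrt r) = O(n^{4/3} + n^{11/6}) = O(n^{11/6})$, matching the claimed space bound. The data structure, queries, and correctness are identical to those of Theorem~\ref{Thm:main}; only the value of $r$ and the way Step~\ref{step:recursivedecomp} is executed change.

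For the preprocessing time, I would first verify that every step except Step~\ref{step:recursivedecomp} already fits within $O(n^{11/6})$ for $r = n^{1/3}$. Step~\ref{step:allboundaryvert} takes $O(n^2/\sqrt r) = O(n^{11/6})$ via the linear-time SSSP algorithm of Henzinger et al.; Step~\ref{step:allinternalpairs} takes $O(nr) = O(n^{4/3})$; and Step~\ref{step:querysep} takes $\tilde O(nr) = \tilde O(n^{4/3})$ by Lemma~\ref{Lem:RegionDS}. The bottleneck is Step~\ref{step:recursivedecomp}: the naive construction from Theorem~\ref{thm:preprocesstime} spends $O(r)$ time per triple $(u,R,H)$ and there are $\Theta(n \cdot n/r)$ such triples, giving $\Theta(n^2)$ time, which exceeds our budget.

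To bring Step~\ref{step:recursivedecomp} down to $\tilde O(n^{11/6})$ in expectation, I would invoke Cabello's randomized algorithm~\cite{Cabello17}. His algorithm, which combines $r$-divisions with randomized point location on abstract Voronoi diagrams, produces in expected time $\tilde O(n^{11/6})$ exactly the Voronoi structures $\mathrm{Vor}_H(R,u)$ that Step~\ref{step:recursivedecomp} needs for every triple $(u,R,H)$. Once a Voronoi diagram is in hand, Corollary~\ref{Cor:Separatortime} computes its separator decomposition in $O(\sqrt r\log r)$ time; summed over all $n \cdot n/r$ triples this contributes an additional $\tilde O(n^2/\sqrt r) = \tilde O(n^{11/6})$, within budget.

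The main obstacle is confirming that Cabello's algorithm can be used as a black box here: his algorithm is presented as an APSP-style routine, so one must check that it can actually output each $\mathrm{Vor}_H(R,u)$ (not merely use it internally) and that the abstract Voronoi diagrams it maintains match the weighted Voronoi diagrams of Section~\ref{sec:recursivedecomp} closely enough that the decomposition of Section~\ref{sec:recursivedecomp} and the point-location data structure of Section~\ref{sec:PreprocRegion} apply without modification. Once this is justified, the randomness only enters the preprocessing running time, the space bound is $O(n^{11/6})$ by the calculation above, and the query behaviour is inherited unchanged from Theorem~\ref{Thm:main}.
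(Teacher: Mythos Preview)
Your approach matches the paper's: set $r = n^{1/3}$, reuse the space bound from Lemma~\ref{lem:spaceboundDS}, verify that all steps except Step~\ref{step:recursivedecomp} already fit in $O(nr + n^2/\sqrt r) = O(n^{11/6})$, and accelerate Step~\ref{step:recursivedecomp} using Cabello~\cite{Cabello17}.

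The one place you are vaguer than the paper is exactly the obstacle you flag. The paper does not treat Cabello's full diameter algorithm as a black box that ``produces all $\mathrm{Vor}_H(R,u)$ in $\tilde O(n^{11/6})$ time''; that would be hard to justify, since Cabello's algorithm only builds Voronoi diagrams for boundary vertices of \emph{his} $r$-division, not for every $u\in V$. Instead the paper isolates a single ingredient from~\cite{Cabello17}: a per-region data structure that, after $O(r^{7/2})$ preprocessing on a region $R$, returns any weighted Voronoi diagram of $R$ (for given weights on $\delta R$) in expected $\tilde O(\sqrt r)$ time. Summing, the preprocessing over all $O(n/r)$ regions costs $\tilde O(n r^{5/2})$, and the $n\cdot O(n/r)$ Voronoi-diagram queries plus separator decompositions (Corollary~\ref{Cor:Separatortime}) cost $\tilde O(n^2/\sqrt r)$. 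The total $O(n r^{5/2} + n^2/\sqrt r)$ balances at $r = n^{1/3}$ to give $O(n^{11/6})$, resolving your obstacle cleanly.
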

\begin{proof}
  We apply Procedure \textsc{Preprocessing} with $r= n^{1/3}$.
  By Lemma~\ref{lem:spaceboundDS}, the total size of the 
  data structure output is $O(n^{11/6})$.

  We now analyse the total preprocessing time.
  For Steps~\ref{step:allboundaryvert},~\ref{step:allinternalpairs},
  and~\ref{step:querysep}, we mimicate the analysis of the proof
  of Theorem~\ref{thm:preprocesstime} and obtain 
  a total preprocessing time of $O(n \cdot r + n^2/\sqrt{r})$.

  We now explain how to speed-up Step~\ref{step:recursivedecomp}.
  We show that Step~\ref{step:recursivedecomp} can be done in time
  $O(n \cdot r^{5/2})$ using Cabello's data structure~\cite{Cabello17} 
  for computing weighted Voronoi diagrams of a given region. 
  
  More formally, Cabello introduces a data structure that allows to
  compute weighted Voronoi diagrams of a given region in expected 
  time $\tilde{O}(\sqrt{r})$. This data structure
  has preprocessing time $O(r^{7/2})$.
  Hence the total preprocessing time for computing the 
  data structure for all the regions is $\tilde{O}(n \cdot r^{5/2})$.

  Then, for each vertex $u$, each region $R$, each hole $H$, 
  the algorithm
  \begin{enumerate}
  \item uses the data structure to compute the weighted Voronoi
    diagram $\text{Vor}_{H}(R,u)$ in expected time
    $\tilde{O}(\sqrt{r})$ and 
  \item computes the separator decomposition of
    Section~\ref{sec:recursivedecomp} in time
    $\tilde{O}(\sqrt{r})$ (by Lemma~\ref{Lem:sepspace} 
    and Corollary~\ref{Cor:Separatortime}).
  \end{enumerate}
  This results in an expected preprocessing time of $O(n \cdot r^{5/2} + 
  n^2/\sqrt{r})$. Choosing $r= n^{1/3}$ yields a bound of
  $O(n^{11/6})$.
\end{proof}


\subsection{Algorithm for Distance Queries}\label{sec:query}
This section is devoted to the presentation of our 
algorithms for answering distance queries between pairs of vertices.

We show that any distance query between two vertices $u$, $v$ can be
performed in $O(\log r)$ time. In the following, let $u,v$ be two 
vertices of the graph. The algorithm is the following.\\
\smallskip

\textsc{Distance Query} $u,v$ 
\begin{enumerate}
\item If $u,v$ belong to the same region or if either $u$ or $v$ is 
  a boundary vertex, the query can be answered in $O(1)$ time since 
  the distances between vertices of the same region and between boundary
  vertices and the other vertices of the graph are stored explicitly.
\item If $u$ and $v$ are internal vertices that 
  belong to two different regions we proceed as 
  follows. Let $R$ be the region containing $v$ and $\delta R$ be the 
  set of boundary vertices of region $R$. The boundary vertices are
  partitioned into holes 
  $\calH = \{H_0, \ldots, H_k\}$, 
  such that $\bigcup_{H \in \calH} H  = \delta R$.
  For each $H \in \calH$, 
  we apply the following procedure.
  Let $\calV$ be the weighted Voronoi 
  diagram where the sites are the vertices of $H$ and the 
  weight of $x \in H$ is the distance from $u$ to $x$.

  We now aim at determining to which cell of $\calV$, $v$ 
  belongs. We use the binary search procedure of 
  Lemma~\ref{lem:mainseparator} on the
  decomposition of $R$ induced by the separators of the weighted
  Voronoi diagram.
  More precisely, we use the algorithm described in 
  Section~\ref{sec:recursivedecomp}, Lemma~\ref{lem:mainseparator}, and
  the query algorithm described in Section~\ref{sec:PreprocRegion}, 
  Lemma~\ref{Lem:RegionDS} to identify a set of 
  at most six Voronoi cells so that one of them contains 
  $v$. This induces a set of at most six boundary vertices
  $X = \{x_0,\ldots,x_k\}$ that represent the centers of the cells.

  Finally, we have the distances from both $u$ and $v$ to 
  all the boundary vertices in $X$.  
  Let $v(H) = \min_{x \in X} \dist(u,x) + \dist(x,v)$.
  The algorithm returns $\min_{H \in \calH} v(H)$.
\end{enumerate}

\begin{lemma}[Running time]
  \label{lem:querytime}
  The \textsc{Distance Query} takes $O(\log r)$ time.
\end{lemma}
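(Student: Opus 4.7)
The plan is to handle the two cases of the \textsc{Distance Query} procedure separately and bound each one using the ingredients already established in the preprocessing lemmas.

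For the first case, when $u$ and $v$ lie in a common region or one of them is a boundary vertex, the answer is read off directly from the tables built in Steps~\ref{step:allboundaryvert} and~\ref{step:allinternalpairs} of \textsc{Preprocessing}. Identifying which sub-case applies can be done in $O(1)$ by consulting, for each internal vertex, the pointer to its region (stored in Step 2). A single table lookup then yields the distance in $O(1)$ time, well within the $O(\log r)$ bound.

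For the second case, the outer loop ranges over the holes $\calH$ of the region $R$ containing $v$. By property 5 of the $r$-division we have $|\calH| = O(1)$, so it suffices to bound the work per hole by $O(\log r)$. Fix such a hole $H$. The preprocessing has already stored, for the triple $(R,u,H)$, the recursive separator decomposition of $\Trivor{R}{u}$ built in Section~\ref{sec:recursivedecomp}, and for each edge $(x,y)$ of $R$ it has stored the point-location data structure $\mathcal D(x,y)$ of Lemma~\ref{Lem:RegionDS}. To locate $v$ among the Voronoi cells of $\Vor{R}{u}$ we invoke Lemma~\ref{lem:mainseparator}. Each separator query reduces to asking whether $v$ lies in some $\Box(b_{i_1},b_{i_2},x,y)$; Lemma~\ref{Lem:RegionDS} answers this in $t=O(1)$ time once the corresponding $\mathcal D(x,y)$ is selected (which is itself $O(1)$ via the representation of the separator). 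Plugging $t=O(1)$ into Lemma~\ref{lem:mainseparator} gives a total of $O(\log r)$ time to isolate at most six candidate Voronoi cells, hence at most six boundary vertices $X=\{x_0,\ldots,x_k\}$ with $k\le 5$.

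Given $X$, computing $v(H)=\min_{x\in X}(\dist(u,x)+\dist(x,v))$ is $O(1)$ because both $\dist(u,x)$ and $\dist(x,v)$ are stored in Step~\ref{step:allboundaryvert} (distances from every vertex to every boundary vertex), and $|X|=O(1)$. Summing over the $O(1)$ holes and taking the final minimum adds only $O(1)$ further work, so the whole query runs in $O(\log r)$ time. The only step that is not $O(1)$ is the binary search of Lemma~\ref{lem:mainseparator}, and that is precisely the piece whose $O(\log r)$ cost forces the bound; there is no deeper obstacle, since correctness of the binary search is already encapsulated in Lemma~\ref{lem:mainseparator} and Lemma~\ref{Lem:RegionDS}.
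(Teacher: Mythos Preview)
Your proof is correct and follows essentially the same approach as the paper: handle the trivial case by table lookup, then for each of the $O(1)$ holes invoke Lemma~\ref{lem:mainseparator} with the $t=O(1)$ separator-query structure of Lemma~\ref{Lem:RegionDS}, yielding $O(\log r)$ per hole and hence overall. Your write-up is in fact slightly more explicit than the paper's in justifying the $O(1)$ region lookup and the final $O(1)$ minimum over $X$.
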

\begin{proof}
  Consider a distance query from a vertex $u$ to a vertex $v$ and assume
  that those vertices are internal vertices of two different regions
  as otherwise the query takes $O(1)$ time.
  Observe that we can determine in $O(1)$ time to which region $v$ 
  belongs. Fix a hole $H$.
  Let $\calV$ be the weighted Voronoi 
  diagram where the sites are the vertices of $H$ and the 
  weight of $x \in \delta R$ is the distance from $u$ to $x$.
  We consider the decomposition of the region of $v$ of 
  $\Vor{R}{u}$.


  Lemma~\ref{Lem:RegionDS} shows that the query time for the data 
  structure defined in Section~\ref{sec:PreprocRegion} is $t = O(1)$.
  Applying Lemma~\ref{lem:mainseparator} with $t=O(1)$ implies 
  that the total time to determine
  in which Voronoi cell $v$ belongs is at most $O(\log r)$.
  
  Finally, computing $\min_{x \in X} \dist(u,x) + \dist(x,v)$ takes $O(1)$ 
  time. By definition of the $r$-division 
  there are $O(1)$ holes.

  Therefore, 
  we conclude that the running time of the \textsc{Distance Query} 
  algorithm is $O(\log r)$.
\end{proof}

We now prove that the algorithm indeed returns the correct distance
between $u$ and $v$.
\begin{lemma}[Correctness]
  \label{lem:correctness}
  The \textsc{Distance Query} on input $u,v$ returns the length 
  of the shortest path between vertices $u$ and $v$ in the graph.
\end{lemma}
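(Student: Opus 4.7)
The plan is to prove correctness by identifying, for each non-trivial query, the specific hole and boundary vertex of $R$ that witness the shortest distance from $u$ to $v$. First, if $u,v$ lie in the same region, or if one of them is a boundary vertex, then the look-up tables built at steps~\ref{step:allboundaryvert} and~\ref{step:allinternalpairs} of \textsc{Preprocessing} return $\dist_G(u,v)$ directly. Otherwise $u$ and $v$ are internal vertices of distinct regions; let $R$ be the region containing $v$. Since $u\notin R$, the unique shortest $u$-$v$ path $P$ in $G$ must cross $\delta R$. I would denote by $x^\ast$ the \emph{last} vertex of $\delta R$ on $P$, and let $H^\ast$ be the hole of $R$ with $x^\ast\in V(H^\ast)$. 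After passing $x^\ast$ the path $P$ visits only internal vertices of $R$, so by uniqueness of shortest paths
\[
\dist_G(u,v) \;=\; \dist_G(u,x^\ast) + \dist_G(x^\ast,v),
\]
and both summands are stored explicitly by step~\ref{step:allboundaryvert}.

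The core step, and the one I expect to be the main obstacle, is to show that $v$ lies in the Voronoi cell of $x^\ast$ in the weighted Voronoi diagram built by \textsc{Distance Query} for the hole $H^\ast$. Since $P$ coincides with the path from $u$ to $v$ in the shortest-path tree $T_u$ of $G$ and $x^\ast$ lies on $P$, we have $v\in T_u(x^\ast)$. Now suppose some $y\in (V(H^\ast)\setminus\{x^\ast\})\cap V(T_u(x^\ast))$ satisfies $v\in T_u(y)$; then $y$ would lie on $P$ strictly between $x^\ast$ and $v$. But $y\in V(H^\ast)\subseteq \delta R$, contradicting the maximality of $x^\ast$ as the last vertex of $\delta R$ on $P$. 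By the definition of Voronoi cells in the preliminaries, $v$ therefore belongs to the cell of $x^\ast$.

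It remains to assemble the pieces. Applying Lemma~\ref{lem:mainseparator} with the point-location data structure of Lemma~\ref{Lem:RegionDS}, the binary search performed by \textsc{Distance Query} on hole $H^\ast$ returns a set of at most six Voronoi cells one of which contains $v$, hence a set $X$ of boundary vertices with $x^\ast\in X$. Consequently
\[
v(H^\ast) \;\le\; \dist_G(u,x^\ast) + \dist_G(x^\ast,v) \;=\; \dist_G(u,v).
\]
For any hole $H\in\calH$ and any $x\in X$ produced for $H$, the triangle inequality gives $\dist_G(u,x)+\dist_G(x,v)\ge \dist_G(u,v)$, so $v(H)\ge \dist_G(u,v)$ for every $H$. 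Combining both bounds yields $\min_{H\in\calH} v(H) = \dist_G(u,v)$, which is exactly the value returned by \textsc{Distance Query}. The subtle point is verifying that the Voronoi cell containing $v$ really corresponds to $x^\ast$—once that is in hand, the rest is a direct composition of the stored distance tables, the binary search of Lemma~\ref{lem:mainseparator}, and the triangle inequality.
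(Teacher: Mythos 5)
Your proof is correct and follows the same overall structure as the paper's: identify the last boundary vertex $x^\ast$ on the shortest $u$-$v$ path, show $v$ lies in the Voronoi cell of $x^\ast$ for the hole $H^\ast$ containing $x^\ast$, invoke Lemma~\ref{lem:mainseparator} and Lemma~\ref{Lem:RegionDS} to conclude $x^\ast\in X$, and finish with the upper bound from $x^\ast$ and the triangle-inequality lower bound over all holes. The only place you diverge slightly is the key step: the paper argues by contradiction via the distance inequality $\dist(y,v)+\dist(y,u)\le\dist(x,v)+\dist(x,u)$, whereas you argue directly from the paper's $T_u$-subtree definition of Voronoi cells, which is a bit more transparent; both are valid.
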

\begin{proof}
  We remark that the distance from any vertex to a boundary vertex is 
  stored explicitly and thus correct.
  Hence, we consider the case where $u$ and $v$ are internal 
  vertices of different regions. 
  Let $P$ be the shortest path from $u$ to $v$ in $G$.
  Let $x \in P$ be the last boundary vertex of $R$ on the path from $u$ 
  to $v$ and let $H_x$ be the hole containing $x$. 
  Let $\calV$ be
  the weighted Voronoi 
  diagram
  where the sites are the boundary vertices of $H_x$ and the 
  weight of $y \in H_x$ is the distance from $u$ to $y$.

  We need to argue that the data structure of 
  Section~\ref{sec:PreprocRegion}, Lemma~\ref{Lem:RegionDS} satisfies
  the conditions of Lemma~\ref{lem:mainseparator}.
  Observe that the separators defined in 
  Section~\ref{sec:recursivedecomp} consist of two shortest
  paths $P_R(b_1,x)$ and $P_R(b_2,y)$ where $b_1,b_2 \in \delta R$ and
    $(y,z)$
  is an edge of $R$. Hence, the set of vertices of the subgraph 
  $\Box(b_1,b_2,y,z)$ correspond to the set of 
  all the vertices that are one of the two sides of the separator.
  Thus, by Lemma~\ref{Lem:RegionDS} 
  the data structure described in Section~\ref{sec:PreprocRegion}
  satisfies the condition of Lemma~\ref{lem:mainseparator},
  with query time $t = O(1)$.

  We argue that $v$ belongs to the Voronoi cell of $x$. 
  Assume towards contradiction that $v$ is in the Voronoi cell
  of $y \neq x$, we would have 
  $\dist(y, v) + w(y) \le \dist(x,v) + w(x)$,
  where $w$ is the weight function associated with the Voronoi diagram.
  Thus, this implies that
  $\dist(y, v) + \dist(y,u) \le \dist(x,v) + \dist(x,u)$.
  Therefore, there exists a shortest path from $u$ to $v$ that 
  goes through $y$.
  Now observe that if $v$ belongs to the Voronoi cell of $y$, the 
  shortest path from $v$ to $y$ does not go through $x$.
  Hence, assuming unique shortest paths between pairs of vertices, 
  we conclude that the last boundary vertex on the path
  from $u$ to $v$ is $y$ and not $x$, a contradiction. Thus,
  $v$ belongs to the Voronoi cell of $x$.

  Combining with Lemma~\ref{Lem:RegionDS}, it follows that
  the Voronoi cell of $x$ is in the set of Voronoi cells $X$ 
  obtained at the end of the recursive procedure.
  Observe that for any $x' \in X$ there exists a path 
  (possibly with repetition of vertices) of length 
  $\dist(x',u) + \dist(x',v)$.
  Therefore, since we assume unique shortest paths between pairs of 
  vertices, we conclude that
  $\dist(u,v) = 
  \dist(x,u) + \dist(x,v) = \min_{x' \in X} \dist(x',v) +
  \dist(x',u) = v(H_x) = \min_{H} v(H)$.
\end{proof}


  

\section{Trade-off}\label{sec:tradeoff}
We now prove \Cref{thm:tradeoff}. We first consider the case with $P = O(n^2)$
and then extend it to the case with efficient preprocessing time.

Let $r_1\le r_2$ be positive integers to be defined later. The case $r_1\le
r_2$ will correspond exactly to $S\ge n\sqrt{n}$. The data structure works as
follows.
\begin{enumerate}
    \item We compute an $r_1$-division and an $r_2$-division of $G$ named
        $\mathcal{R}_1$ and $\mathcal{R}_2$ respectively. Let $\delta_1$
        respectively $\delta_2$ denote all the boundary vertices of
        $\mathcal{R}_1$ resp. $\mathcal{R}_2$.
    \item For each region $R$ of $\mathcal{R}_1$ and $\mathcal{R}_2$, compute
        and store the pairwise distances of the nodes of $R$.
    \item For each $u\in \delta_1$ and $v\in \delta_2$, compute and store the
        distance between $u$ and $v$ in $G$.
    \item For each $u\in \delta_1$, region $R\in \mathcal{R}_2$ and hole $H\in
        R$, compute $\text{Vor}_{H}(R,u)$ and store a separator decomposition
        as described in Section~\ref{sec:recursivedecomp}.
    \item For each Region $R\in\mathcal{R}_2$, for each edge $(x,y)\in R$, for
        each hole $H$, compute and store the data structure described in
        Section~\ref{sec:PreprocRegion}.
\end{enumerate}

We start by bounding the space.
\begin{lemma}\label{lem:tradeoff_size}
    The total size of the data structure described above is
    \[
        O\!\left(nr_2 + \frac{n^2}{\sqrt{r_1r_2}}\right)\ .
    \]
\end{lemma}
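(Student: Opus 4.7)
The plan is to bound the space contribution of each of the five steps separately and then combine them.

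First I would recall the standard parameters of the two $r$-divisions. The $r_1$-division has $O(n/r_1)$ regions of size at most $r_1$ each, and a total of $O(n/\sqrt{r_1})$ boundary vertices, so $|\delta_1| = O(n/\sqrt{r_1})$; analogously $|\mathcal{R}_2| = O(n/r_2)$ and $|\delta_2| = O(n/\sqrt{r_2})$. Step 1 itself stores only the divisions, contributing $O(n)$.

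Next I would tackle the per-region distance tables of Step 2. Storing all pairwise distances inside a region $R$ of $\mathcal{R}_i$ costs $O(|V(R)|^2) = O(r_i^2)$, so summed over regions this gives $O(nr_1) + O(nr_2) = O(nr_2)$ since $r_1 \le r_2$. Step 3 stores one distance per pair in $\delta_1\times\delta_2$, yielding $O(|\delta_1|\cdot|\delta_2|) = O(n^2/\sqrt{r_1 r_2})$, which already accounts for the second term in the claimed bound.

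For Step 4 I would invoke Lemma~\ref{Lem:sepspace}: for each source $u\in\delta_1$, each region $R\in\mathcal{R}_2$, and each of the $O(1)$ holes of $R$, the recursive separator decomposition of $\Trivor{R}{u}$ occupies $O(\sqrt{r_2})$ space. Multiplying gives $O(|\delta_1|\cdot|\mathcal{R}_2|\cdot\sqrt{r_2}) = O\!\left(\tfrac{n}{\sqrt{r_1}}\cdot\tfrac{n}{r_2}\cdot\sqrt{r_2}\right) = O(n^2/\sqrt{r_1 r_2})$, matching the second term. For Step 5 I would apply Lemma~\ref{Lem:RegionDS}, which gives $O(r_2)$ space per choice of region, edge, and hole; since each region has $O(r_2)$ edges and $O(1)$ holes, and there are $O(n/r_2)$ regions, the total is $O\!\left(\tfrac{n}{r_2}\cdot r_2\cdot r_2\right) = O(nr_2)$.

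Adding the five contributions, the dominant terms are $O(nr_2)$ and $O(n^2/\sqrt{r_1 r_2})$, giving the claimed bound. No single step is really an obstacle here; the only subtlety is remembering to use $r_1\le r_2$ when combining the two pairwise-distance tables so that the $O(nr_1)$ term is absorbed into $O(nr_2)$, and to note that the key reason Step 4 scales only with $|\delta_1|$ (not with all of $V$ as in \Cref{Thm:main}) is that we only need separator decompositions for sources that are boundary vertices of the coarser $r_1$-division.
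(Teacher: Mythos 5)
Your proof is correct and follows the same step-by-step accounting as the paper: $O(nr_2)$ for the per-region tables and for the point-location data structures of Lemma~\ref{Lem:RegionDS}, and $O(n^2/\sqrt{r_1 r_2})$ for both the $\delta_1\times\delta_2$ distance table and the separator decompositions via Lemma~\ref{Lem:sepspace}. The only difference is cosmetic: you spell out the bounds $|\delta_i| = O(n/\sqrt{r_i})$ and $|\mathcal R_i| = O(n/r_i)$ explicitly, whereas the paper leaves them implicit (and has a minor off-by-one in its step numbering).
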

\begin{proof}
    Consider the steps above. By definition of $\mathcal{R}_1$ and
    $\mathcal{R}_2$ step 2 uses $O(nr_2)$ space (since we assumed $r_1\le r_2$)
    and step 4 uses $\frac{n}{\sqrt{r_1}} \cdot \frac{n}{\sqrt{r_2}}$ space. By
    Lemma~\ref{Lem:sepspace} step 5 takes $O(n/\sqrt{r_2})$ for each node of
    $\delta_1$ giving $O(n^2/\sqrt{r_1r_2})$ in total. By
    Lemma~\ref{Lem:RegionDS} the total space of step 6 is $O(nr_2)$.
\end{proof}

Now consider a query pair $u,v$. If $u$ and $v$ belong to the same region in
$\mathcal{R}_1$ or $\mathcal{R}_2$ we return the stored distance.
Otherwise we iterate over each boundary node $w$ in the region of $u$ in
$\mathcal{R}_1$. For each such boundary node we compute the distance to $v$
using the data structures of steps 5 and 6 above similar to the
query algorithm from Section~\ref{sec:query}. This is possible since we have
stored the distances between all the needed boundary nodes in step 4. The
minimum distance returned over all such $w$ is the answer to the query.

From the description above it is clear that we get a query time of
$O(\sqrt{r_1}\log(r_2))$. The correctness follows immediately from the
discussion in the proof of Theorem~\ref{Thm:main}. What is left now is to
balance the space to obtain Theorem~\ref{thm:tradeoff}.
The expression of Lemma~\ref{lem:tradeoff_size} is balanced when
\[
    r_2 = \frac{n^{2/3}}{r_1^{1/3}}\ .
\]
Now, since we assumed that $S \ge n\sqrt{n}$ we can focus on the case when
$r_2\ge \sqrt{n}$ and thus we get $r_1\le r_2$ as we required. Plugging into
the definition of $Q$ we get exactly
\[
    Q = O(\sqrt{r_1}\log n) = \frac{n^{5/2}}{S^{3/2}}\log n\ ,
\]
which gives us Theorem~\ref{thm:tradeoff}.

For pre-processing time, we consider two cases similar to
\Cref{thm:preprocesstime} and \Cref{cor:cabellotime}. It follows directly from
the discussion above and \Cref{thm:preprocesstime} that the preprocessing can
be performed in $O(n^2)$. We may, however, also consider pre-processing time as
a parameter similar to space and query time. This gives a 3-way trade-off. In
\Cref{cor:cabellotime} we showed how to lower pre-processing time by increasing
the space. Here we discuss the case of lowering pre-processing time further by
increasing the query time.
It follows from the discussion above and \Cref{cor:cabellotime} that we can
perform pre-processing of the above structure in $O(nr_1 + nr_2^{5/2} +
n^2/\sqrt{r_1r_2})$ time and get the same space bound. If we assume that $r_1
\le r_2^{5/2}$ we get a data structure with query time $Q = O(\sqrt{r_1}\log
n)$ and space and pre-processing time $S = O(nr_2^{5/2} + n^2/\sqrt{r_1r_2})$.
Up to logarithimic and constant factors,
this gives us $Q = n^{11/5}/S^{6/5}$. For any $S \ge n^{16/11}$ this satisfies
the requirement that $r_1\le r_2^{5/2}$.
As an example, we get a data structure with space and pre-processing
time $O(n^{16/11})$ and a query time of $O(n^{5/11}\log n)$.

\newpage
\bibliographystyle{plain}
\bibliography{planar}

\newpage
\appendix

\end{document}